\numberwithin{equation}{section}
\newcommand{\Compl}{\mathbb{C}}
\newcommand{\R}{\mathbb{R}}
\newcommand{\Z}{\mathbb{Z}}
\newcommand{\conj}[1]{\overline{#1}}
\newcommand{\cConjScl}[1]{\bar{#1}}
\newcommand{\aConjScl}[1]{#1^*}
\newcommand{\aConjMat}[1]{#1^+}
\newcommand{\charfp}[2]{\psi\left(#1,#2\right)}
\newcommand{\asKer}[2]{K\left(#1, #2\right)}
\newcommand{\CF}{\mathsf{f}}
\newcommand{\ens}{M_n}
\newcommand{\sclA}{y}
\newcommand{\matA}{Y}
\newcommand{\cMatGin}{X}
\newcommand{\cMatPos}{\mathcal{Z}}
\newcommand{\cSclN}{t}
\newcommand{\cVecN}{\bm{\cSclN}}
\newcommand{\cSclA}{q}
\newcommand{\cMatA}{Q}
\newcommand{\cSetMatA}{\bm{Q}}
\newcommand{\cVecGI}{\bm{h}}
\newcommand{\aSclA}{\xi}
\newcommand{\aMatA}{\Xi}
\newcommand{\aSetMatA}{\mathbf{\aMatA}}
\newcommand{\aSclB}{\phi}
\newcommand{\aVecB}{\bm{\aSclB}}
\newcommand{\aMatB}{\Phi}
\newcommand{\aSclBt}{\varphi}
\newcommand{\aVecBt}{\bm{\aSclBt}}
\newcommand{\aSclC}{\theta}
\newcommand{\aVecC}{\bm{\aSclC}}
\newcommand{\aMatC}{\Theta}
\newcommand{\aSclCt}{\vartheta}
\newcommand{\aVecCt}{\bm{\aSclCt}}
\newcommand{\aSclD}{\rho}
\newcommand{\aVecD}{\bm{\aSclD}}
\newcommand{\aSclE}{\tau}
\newcommand{\aSclF}{\nu}
\newcommand{\aVecF}{\bm{\aSclF}}
\newcommand{\aSclG}{\upsilon}
\newcommand{\aMatG}{\Upsilon}
\newcommand{\aSclH}{\aSclE}
\newcommand{\aVecH}{\bm{\aSclH}}
\newcommand{\aVecGrI}{\bm{\upsilon}}
\newcommand{\cumul}[2]{\kappa_{#1,#2}}
\newcommand{\indexset}{\mathcal{I}}
\newcommand{\stpointsnbh}{\Omega_n}
\newcommand{\idom}{\mathcal{D}}
\newcommand{\Vanddet}{\triangle}
\newcommand{\herm}{\mathcal{H}}
\newcommand{\der}[2]{\frac{d #1}{d #2}}
\newcommand{\abs}[1]{\left\lvert#1\right\rvert}
\newcommand{\norm}[1]{\left\lVert#1\right\rVert}
\newcommand{\normsized}[2][ ]{#1\lVert#2#1\rVert}
\newcommand{\acknowledgement}[1]{\medskip
	\textbf{Acknowledgement.} #1}
\DeclareMathOperator{\tr}{tr}
\DeclareMathOperator{\E}{\mathbf{E}}
\DeclareMathOperator{\diag}{diag}
\newtheorem{thm}{Theorem}
\newtheorem{prop}{Proposition}
\newtheorem{lem}{Lemma}
\theoremstyle{remark}
\newtheorem{rem}{Remark}
\title{On the Correlation Functions of the Characteristic Polynomials of Non-Hermitian Random Matrices with Independent Entries}
\author{Ie.\ Afanasiev\footnote{B. Verkin Institute for Low Temperature Physics and Engineering of the National Academy of Sciences of Ukraine, Kharkiv, Ukraine;
e-mail: \href{mailto:afanasiev@ilt.kharkov.ua}{afanasiev@ilt.kharkov.ua}. Supported in part by The President of Ukraine grant and by the Akhiezer Foundation scholarship.}
}
\begin{document}

\maketitle
\begin{abstract}
	The paper is concerned with the asymptotic behavior of the correlation functions of the characteristic polynomials of non-Hermitian random matrices with independent entries. It is shown that the correlation functions behave like that for the Complex Ginibre Ensemble up to a factor depending only on the fourth absolute moment of the common probability law of the matrix entries.
\end{abstract}
\section{Introduction}
The Random Matrix Theory has been developed for some sixty years. The story began with the study of symmetric and Hermitian random matrices. They have remained the most studied ever since. However, non-Hermitian matrices are not so well studied.

The present paper is concerned with the simplest non-Hermitian ensemble which is an analog of the Wigner ensemble. The matrices are constructed of independent identically distributed (i.i.d.)\ complex random variables. More precisely, the matrices have the form
\begin{equation} \label{Gin ens}
\ens = \frac{1}{\sqrt{n}}\cMatGin = \frac{1}{\sqrt{n}}(x_{jk})_{j,k = 1}^n,
\end{equation}
where $x_{jk}$ are i.i.d.\ complex random variables such that
\begin{equation}\label{moments}
\E\{x_{jk}\} = \E\{x_{jk}^2\} = 0, \quad \E\{\abs{x_{jk}}^2\} = 1.
\end{equation}
Here and everywhere below $\E$ denotes the expectation with respect to (w.r.t.)\ all random variables. This ensemble has various applications in physics, neuroscience, economics, etc. For detailed information see \cite{Ak-Ka:07} and references therein.

Define the Normalized Counting Measure (NCM) of eigenvalues as
\begin{equation*}
	N_n(\Delta) = \#\{\lambda_j^{(n)} \in \Delta,\, j = 1, \ldots, n\}/n, 
\end{equation*}
where $\Delta$ is an arbitrary Borel set in the complex plane, $\left\{\lambda_j^{(n)}\right\}_{j = 1}^n$ are the eigenvalues of~$\ens$. The NCM is known to converge to the uniform distribution on the unit disc. This distribution is called the circular law. This result has a long and rich history. Mehta was the first who obtained it for $x_{jk}$ being complex Gaussian in 1967 \cite{Me:67}. The proof strongly relied on the explicit formula for the common probability density of eigenvalues due to Ginibre \cite{Gin:65}. Unfortunately, there is no such a formula in the general case. That is why other methods have to be used. The Hermitization approach introduced by Girko \cite{Gir:84} appeared to be an effective method. The main idea is to reduce the study of matrices~\eqref{Gin ens} to the study of Hermitian matrices using the logarithmic potential of a measure
\begin{equation*}
	P_\mu(z) = \int\limits_\Compl \log \abs{z - \zeta}\, d\mu(\zeta).
\end{equation*}
This approach was successfully developed by Girko in the next series of works \cite{Gir:94, Gir:04_1, Gir:04_2, Gir:05}. The final result in the most general case was established by Tao and Vu \cite{Ta-Vu:10}. Notice that there are a lot of partial results besides those listed above. The interested reader is directed to \cite{Bo-Ch:12}.

The Central Limit Theorem (CLT) for non-Hermitian random matrices linear statistics was proven in some partial cases in \cite{Fo:99, Ri-Si:06, Ri-Vi:07}. The best results for today were obtained by Kopel in \cite{Ko:15} for smooth functions and by Tao and Vu in \cite{Ta-Vu:15} for small radii disc indicators. Both mentioned results require $\Re x_{jk}$ and $\Im x_{jk}$ being independent and having the first four moments as in the Gaussian case (which is often referred as GinUE similarly to the Gaussian Unitary Ensemble (GUE) in Hermitian case). The article \cite{Ta-Vu:15} also deals with a local regime for these matrices. It was established that under the same conditions the $k$-point correlation function converges in vague topology to that for GinUE.

One can observe that non-Hermitian random matrices are more complicated than their Hermitian counterparts. Indeed, the Hermitian case was successfully dealt with using the Stieltjes transform or the moments method. However, a measure in the plane can not be recovered from its Stieltjes transform or its moments. Thus these approaches to the analysis fail in the non-Hermitian case. 

The present article suggests to apply the supersymmetry technique (SUSY). It is a rather powerful method which is widely applied at the physical level of rigor (for instance \cite{Fy-Mi:91, Mi-Fy:91}). 
There are also a lot of rigorous results, which were obtained using SUSY in the recent years, e.g.\ \cite{Di-Lo-So:18}, \cite{Di-Sp-Zi:10}, \cite{ShM-ShT:16}, \cite{ShM-ShT:17}, \cite{ShM-ShT:18}, etc. Supersymmetry technique is usually used in order to obtain an integral representation for ratios of determinants.
Since the main spectral characteristics such as density of states, spectral correlation functions, etc.\ often can be expressed via ratios of determinants, SUSY allows to get the integral representation for these characteristics too. For detailed discussion on connection between spectral characteristics and ratios of determinants see \cite{St-Fy:03, Bo-St:06, Gu:15}. See also \cite{Fy-St:03, Re-Ki-Gu-Zi:12}.

Let us consider the second spectral correlation function $R_2$ defined by the equality
\begin{equation*}
	\E\Bigg\{2\sum\limits_{1 \le j_1 < j_2 \le n}\eta\left(\lambda_{j_1}^{(n)}, \lambda_{j_2}^{(n)}\right)\Bigg\} = \int\limits_{\Compl^2} \eta(\lambda_1, \lambda_2) R_2(\lambda_1, \lambda_2) d\bar{\lambda}_1 d\lambda_1 d\bar{\lambda}_2 d\lambda_2,
\end{equation*}
where the function $\eta \colon \Compl^2 \to \Compl$ is bounded, continuous and symmetric in its arguments. Using the logarithmic potential, $R_2$ can be represented via ratios of the determinants of $\ens$ with the most singular term of the form
\begin{equation}\label{2detsratio}
\int\limits_{0}^{\varepsilon_0}\int\limits_{0}^{\varepsilon_0}\frac{\partial^2}{\partial \delta_1 \partial \delta_2}\E\Bigg\{
\prod\limits_{j = 1}^2 \frac{\det \left((\ens - z_j)(\ens - z_j)^* + \delta_j\right)}{\det \left((\ens - z_j)(\ens - z_j)^* + \varepsilon_j\right)}
\Bigg\}\Bigg|_{\delta = \varepsilon} d\varepsilon_1 d\varepsilon_2
\end{equation}
The integral representation for \eqref{2detsratio} obtained by SUSY will contain both commuting and anti-commuting variables. Such type integrals are rather difficult to analyse. That is why one would investigate a more simple but similar integral to shed light on the situation. This integral arises from the study of the correlation functions of the characteristic polynomials. Moreover, the correlation functions of the characteristic polynomials are of independent interest. They were studied for many ensembles of Hermitian and real symmetric matrices, for instance \cite{Br-Hi:00}, \cite{Br-Hi:01}, \cite{ShT:11}, \cite{ShT:13}, \cite{ShM-ShT:17} etc. The other result on the asymptotic behavior of the correlation functions of the characteristic polynomials of non-Hermitian matrices of the form $H + i\Gamma$, where $H$ is from GUE and $\Gamma$ is a fixed matrix of rank $M$, was obtained in \cite{Fy-Kh:99}. The kernel computed there, in the limit of rank $M \to \infty$ of the perturbation $\Gamma$ (taken after matrix size $n \to \infty$) after appropriate rescaling approaches the form \eqref{asympt ker}. It was demonstrated in \cite[Sec. 2.2]{Fy-So:03}.

Let us introduce the $m$\textsuperscript{th} correlation function of the characteristic polynomials
\begin{equation}
\label{F_m}
\CF_m(Z) = \E\Bigg\{
\prod\limits_{j = 1}^m \det \left(\ens - z_j\right)\left(\ens - z_j\right)^*
\Bigg\},
\end{equation}
where 
\begin{equation}\label{Z def}
Z = \diag\{z_1, \dotsc, z_m\}
\end{equation}
and $z_1$, \ldots, $z_m$ are complex parameters which may depend on~$n$. We are interested in the asymptotic behavior of \eqref{F_m}, as $n \to \infty$, for
\begin{equation}\label{z_j}
z_j = z_0 + \frac{\zeta_j}{\sqrt{n}}, \quad j = 1,2,\dotsc,m,
\end{equation}
where $z_0$, $\zeta_1$, \ldots, $\zeta_m$ are $n$-independent complex numbers, and $z_0$ is in the bulk of the spectrum, i.e. $\abs{z_0} < 1$. For GinUE the value of \eqref{F_m} is known. In \cite{Ak-Ve:03} Akemann and Vernizzi showed that
\begin{equation}\label{finite-n Gauss result}
	\E\Bigg\lbrace \prod\limits_{j = 1}^m \det \left(X - z_j^{(1)}\right)\left(X - z_j^{(2)}\right)^*
	\Bigg\rbrace = \Bigg(\prod\limits_{l = n}^{n + m - 1} l!\Bigg) \frac{\det (K_n(z_j^{(1)}, \cConjScl{z}_k^{(2)}))_{j,k = 1}^m}{\Vanddet(Z^{(1)})\Vanddet((Z^{(2)})^*)},
\end{equation}
where
\begin{equation*}
	K_n(z, w) = \sum\limits_{l = 0}^{n + m - 1} \frac{(z\cConjScl{w})^l}{l!}.
\end{equation*}
and $\Vanddet(Z^{(1)})$ (resp. $\Vanddet((Z^{(2)})^*)$) is a Vandermonde determinant of $z_1^{(1)}$, \ldots, $z_m^{(1)}$ (resp. $\cConjScl{z}_1^{(2)}$, \ldots, $\cConjScl{z}_m^{(2)}$). Putting $z_j^{(1)} = z_j^{(2)} = \sqrt{n}z_j$ , $z_j$ of the form \eqref{z_j}, one deduces 
from \eqref{finite-n Gauss result} that
\begin{equation}\label{asympt: Gauss case}
	\lim\limits_{n \to \infty} n^{-\frac{m^2 - m}{2}}\frac{\CF_m(Z)}{\CF_1(z_1)\dotsb \CF_1(z_m)} = \frac{\det (K(\zeta_j, \zeta_k))_{j,k = 1}^m}{\abs{\Vanddet(\cMatPos)}^2},
\end{equation}
where $\cMatPos = \diag\{\zeta_1, \dotsc, \zeta_m\}$ and
\begin{equation}\label{asympt ker}
	\asKer{z}{w} = e^{-\abs{z}^2/2 - \abs{w}^2/2 + z\bar{w}}.
\end{equation}

The other result on the characteristic polynomials of GinUE matrices was obtained by Webb and Wong in \cite{We-Wo:19}. They showed that for any complex $\gamma$ with $\Re \gamma > -2$
\begin{equation}\label{gamma-moment}
	\E\left\lbrace \abs{\det (M_n - z_0)}^{\gamma} \right\rbrace = n^{\frac{\gamma^2}{8}} e^{\frac{\gamma}{2}n(\abs{z_0}^2 - 1)}\frac{(2\pi)^{\frac{\gamma}{4}}}{G(1 + \frac{\gamma}{2})}(1 + o(1)),
\end{equation}
where $G$ is the Barnes $G$-function.

In this article the general case of arbitrary distribution, satisfying \eqref{moments} is considered. The main result of the paper is
\begin{thm}\label{th1}
	Let an ensemble of non-Hermitian random matrices $\ens$ be defined by~\eqref{Gin ens} and~\eqref{moments}. Let also the first $2m$ absolute moments of the common distribution of entries of $\ens$ be finite and $z_j$, $j = 1, \dotsc, m$, have the form \eqref{z_j}. Then
	\begin{enumerate}[label=(\roman*)]
	\item the $m$\textsuperscript{th} correlation function of the characteristic polynomials~\eqref{F_m} 
	satisfies the asymptotic relation
	\begin{equation*} 
		\lim\limits_{n \to \infty} n^{-\frac{m^2 - m}{2}}\frac{\CF_m(Z)}{\CF_1(z_1)\dotsb \CF_1(z_m)} = C_{m,z_0}^{(i)}
		e^{\frac{m^2 - m}{2}\left(1 - \abs{z_0}^2\right)^2\cumul{2}{2}} \frac{\det (K(\zeta_j, \zeta_k))_{j,k = 1}^m}{\abs{\Vanddet(\cMatPos)}^2},
	\end{equation*}
	where $C_{m,z_0}^{(i)}$ is some constant, which does not depend on the common distribution of entries and on $\zeta_1$, \ldots, $\zeta_m$; 
	$\cumul{2}{2} = \E\{\abs{x_{11}}^4\} - 2$ 
	and $\asKer{z}{w}$ is defined in \eqref{asympt ker};
	\item in particular case $\zeta_1 = \dotsb = \zeta_m = 0$ we have 
	\begin{equation}\label{main:moments}
		\E\left\lbrace \abs{\det (M_n - z_0)}^{2m} \right\rbrace = C_{m,z_0}^{(ii)} e^{\frac{m^2 - m}{2}\left(1 - \abs{z_0}^2\right)^2\cumul{2}{2}} n^{\frac{m^2}{2}} e^{mn(\abs{z_0}^2 - 1)}(1 + o(1)),
	\end{equation}
	where $C_{m,z_0}^{(ii)}$ is some constant, which does not depend on the common distribution of entries.
	\end{enumerate}
\end{thm}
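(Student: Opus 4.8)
The plan is to obtain an exact integral representation of $\CF_m(Z)$ by the supersymmetry technique and then to extract the large-$n$ asymptotics by a saddle-point analysis. Since \eqref{F_m} is a product of determinants with no determinant in a denominator, only anti-commuting variables are needed. First I would represent each factor $\det(\ens - z_j)(\ens - z_j)^*$ as a Gaussian Grassmann integral, introducing $2m$ families of anti-commuting vectors in $\Compl^n$: one family coupling to $\ens - z_j$ and one to $(\ens - z_j)^*$. In this representation the dependence on $\cMatGin$ sits in the exponent and is \emph{linear} in the entries $x_{jk}$. Averaging over $\cMatGin$ then factorises over the i.i.d.\ entries, and using \eqref{moments} each entry contributes its characteristic-function-type logarithm; expanding it in the dual variable produces a leading quartic (four-Grassmann) term controlled by $\E\{\abs{x_{11}}^2\} = 1$, a subleading term carrying the fourth cumulant $\cumul{2}{2} = \E\{\abs{x_{11}}^4\} - 2$, and still higher terms that I expect to be negligible in the scaling \eqref{z_j}.

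Next I would decouple the quartic term by a Hubbard--Stratonovich transformation, trading the four-Grassmann interaction for an integral over an auxiliary commuting field---an $m \times m$ (respectively $2m \times 2m$) matrix $\cMatA$ encoding the overlaps of the Grassmann families. After the commuting field is introduced the Grassmann integral becomes Gaussian and can be performed exactly, turning \eqref{F_m} into an integral of the form $\int e^{-n\,S(\cMatA)}(1 + \text{corrections})\, d\cMatA$, where the action $S$ is (the logarithm of) a determinant of size $O(m)$ and the corrections collect the fourth-cumulant contribution together with the fluctuation prefactor. The factor $\CF_1(z_1)\dotsb\CF_1(z_m)$ in the denominator of the claimed limit is exactly what normalises the single-replica part of this action, so it is natural to analyse the ratio directly.

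The core of the argument is the asymptotic evaluation of $\int e^{-n\,S(\cMatA)}(\dotsb)\,d\cMatA$ as $n \to \infty$. I would first locate the saddle points of $S$; in the bulk regime $\abs{z_0} < 1$ the stationarity (self-consistency) equation should single out a saddle---in fact a saddle \emph{manifold}, since the problem has a residual symmetry---governed by the local density $1 - \abs{z_0}^2$. Parametrising this manifold and expanding $S$ to second order in the transverse fluctuations yields a Gaussian integral whose value supplies the constant $C_{m,z_0}^{(i)}$ and whose $\zeta_j$-dependence, entering through $z_j = z_0 + \zeta_j/\sqrt{n}$ on the microscopic scale, reproduces the determinantal factor $\det(\asKer{\zeta_j}{\zeta_k})_{j,k=1}^m / \abs{\Vanddet(\cMatPos)}^2$. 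Crucially, when the fourth-cumulant correction is evaluated at the saddle it multiplies the answer by $e^{\frac{m^2 - m}{2}(1 - \abs{z_0}^2)^2\cumul{2}{2}}$: the factor $1 - \abs{z_0}^2$ arises from the saddle value, its square from the two conjugate Grassmann families per replica, and the combinatorial factor $\frac{m^2 - m}{2}$ from the number of distinct replica pairs whose mutual overlaps feel the interaction. Matching against the Gaussian case \eqref{asympt: Gauss case}, where $\cumul{2}{2} = 0$, identifies $C_{m,z_0}^{(i)}$ and confirms the kernel structure; part (ii) then follows by setting $\zeta_1 = \dotsb = \zeta_m = 0$, where the ratio $\det(\asKer{\zeta_j}{\zeta_k})_{j,k=1}^m / \abs{\Vanddet(\cMatPos)}^2$ degenerates to an explicit constant and \eqref{finite-n Gauss result} supplies the $n^{m^2/2} e^{mn(\abs{z_0}^2 - 1)}$ scaling.

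The hard part will be the rigorous saddle-point analysis: justifying the deformation of the contour of the commuting field onto the steepest-descent manifold, controlling the integral over the non-compact saddle manifold uniformly in $n$, and---most delicately---showing that the fourth cumulant enters exactly through the stated factor while all higher cumulants and the non-Gaussian tails of the fluctuation integral are asymptotically negligible in the scaling \eqref{z_j}. Establishing that the transverse Hessian is non-degenerate on the bulk saddle manifold, so that the Gaussian approximation is legitimate, is the key technical point on which the whole asymptotic rests.
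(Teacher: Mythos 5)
Your proposal follows essentially the same route as the paper: Grassmann representation of the determinants, averaging via the cumulant expansion of $\log\charfp{t_1}{t_2}$, Hubbard--Stratonovich decoupling into commuting matrix fields, steepest descent over the saddle manifold $\cMatA_1 = \lambda_0 UV^*$ with $\lambda_0^2 = 1 - \abs{z_0}^2$, and the fourth cumulant entering at the saddle through the $\binom{m}{2}$-dimensional overlap field to give exactly the factor $e^{\frac{m^2-m}{2}(1-\abs{z_0}^2)^2\cumul{2}{2}}$. The only ingredient you leave implicit is how the $\zeta_j$-dependence is actually extracted from the group integral --- the paper does this via the Harish-Chandra/Itsykson--Zuber formula (Proposition \ref{pr:H-C/I--Z formula}), and at $\zeta_1=\dotsb=\zeta_m=0$ it evaluates the unitary integral directly rather than taking a confluent limit of the degenerate ratio $\det(\asKer{\zeta_j}{\zeta_k})/\abs{\Vanddet(\cMatPos)}^2$ --- but these are computational details within the same scheme.
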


\begin{rem}
	Going through the proof of Theorem \ref{th1} one can determine constants $C_{m,z_0}^{(i)}$ and $C_{m,z_0}^{(ii)}$. Their values are
	\begin{align*}
		&C_{m,z_0}^{(i)} = 1,
		&C_{m,z_0}^{(ii)} = (2\pi)^{m/2} \Bigg(\prod\limits_{j = 1}^{m - 1} j!\Bigg)^{-1}.
	\end{align*}
\end{rem}

Let us point out that $\cumul{2}{2} = 0$ in Gaussian case, and the result of Theorem~\ref{th1} is in full agreement with the results for GinUE (cf.\ \eqref{asympt: Gauss case}, \eqref{gamma-moment}). Theorem also shows that the asymptotics of $\CF_m$\footnote{Here and below we omit $Z$ only if $Z = \diag\{z_1, \dotsc, z_m\}$.} is similar to the asymptotics of the $m$-point spectral correlation function (see~\cite{Ta-Vu:15}).

The paper is organized as follows. Section~\ref{sec:IR} is devoted to the derivation of the suitable integral representation for $\CF_m$ by using the SUSY approach. In Section~\ref{sec:asympt analysis} we apply the steepest descent method to the obtained integral representation and find out the asymptotic behavior of $\CF_m$. In order to compute it, the Harish-Chandra/Itsykson--Zuber formula is used. For the reader convenience both sections are divided into two parts treating the Gaussian and general cases respectively.

\acknowledgement{The author is grateful to Prof.\ M.\ Shcherbina for the statement of the problem and fruitful discussions.
}

\subsection{Notations}
Through out the article lower-case letters denote scalars, bold lower-case letters denote vectors, upper-case letters denote matrices and bold upper-case letters denote sets of matrices. We use the same letter for a matrix, for its columns and for its entries. 
Table \ref{tab:notation} shows an exact correspondence.
\begin{table}
	\begin{center}
		\begin{tabular}{|c|c|c|c|}
			\hline
			Set of matrices & Matrix & Column & Entry \\
			\hline
			$\cSetMatA$ & $\cMatA_{p,s}$ & & $\cSclA^{(p,s)}_{\alpha\beta}$ \\
			\hline
			$\aSetMatA$ & $\aMatA_{p,s}$ & & $\aSclA^{(p,s)}_{\alpha\beta}$ \\
			\hline
			& $\aMatB$ & $\aVecB_j$ & $\aSclB_{kj}$ \\
			\hline
			& $\aMatC$ & $\aVecC_j$ & $\aSclC_{kj}$ \\
			\hline
			& $\matA_{k,p,s}$ & & $\sclA^{(k,p,s)}_{\alpha\beta}$ \\
			\hline
			& $U$ & & $u_{kj}$ \\
			\hline
			& $V$ & & $v_{kj}$ \\
			\hline
		\end{tabular}
	\end{center}
	\caption{Notation correspondence}\label{tab:notation}
\end{table}
Besides, for any matrix $A$ we denote by $(A)_j$ its $j$-th column and by $(A)_{kj}$ its entry in the $k$-th row and in the $j$-th column.

The term ``Grassmann variable'' is a synonym for ``anti-commuting variable''. The variables of integration $\aSclB$, $\aSclBt$, $\aSclC$, $\aSclCt$, $\aSclD$, $\aSclA$, $\aSclE$ and $\aSclF$ are Grassmann variables, all the other variables of integration unspecified by an integration domain are either complex or real. We split all the generators of Grassmann algebra into two equal sets and consider the generators from the second set as ``conjugates'' of that from the first set. I.e., for Grassmann variable $\aSclG$ we use $\aConjScl{\aSclG}$ to denote its ``conjugate''. Furthermore, if $\aMatG = (\aSclG_{jk})$ means a matrix of Grassmann variables then $\aConjMat{\aMatG}$ is a matrix $(\aConjScl{\aSclG_{kj}})$. $d$-dimensional vectors are identified with $d \times 1$~matrices.

Integrals without limits denote either integration over Grassmann variables or integration over the whole space $\Compl^d$ or $\R^d$. Let also $d\cVecN^*d\cVecN$ ($\cVecN = (\cSclN_1, \dotsc, \cSclN_d)^T \in \Compl^d$) denote the measure $\prod\limits_{j = 1}^d d\cConjScl{\cSclN}_j d\cSclN_j$ on the space $\Compl^d$. Similarly, for vectors with anti-commuting entries $d\aConjMat{\aVecH}d\aVecH = \prod\limits_{j = 1}^d d\aConjScl{\aSclH_j} d\aSclH_j$. Note that the space of matrices is a linear space over $\Compl$. Thus the same notations are used for them as well.

Through out the article $U(m)$ is a group of unitary $m \times m$ matrices. In order to simplify the notation we sometimes write $\cMatA_j$ instead of $\cMatA_{j,j}$ and $\cSclA_{\alpha\beta}^{(j)}$ instead of $\cSclA_{\alpha\beta}^{(j,j)}$. In addition, $C$, $C_1$~denote various $n$-independent constants which can be different in different formulas.

\section{Integral representation for $\CF_m$}\label{sec:IR}
In this section we obtain a convenient integral representation for the correlation function of characteristic polynomials $\CF_m$ defined by \eqref{F_m}.
\begin{prop}\label{prop:IR}
	Let an ensemble $\ens$ be defined by~\eqref{Gin ens} and~\eqref{moments}. Then 
	the $m$\textsuperscript{th} correlation function of the characteristic polynomials $\CF_m$ defined by \eqref{F_m} can be represented in the following form
	\begin{equation}\label{IR result}
	\CF_m = \left(\frac{n}{\pi}\right)^{c_m} \int g(\cSetMatA) e^{(n - c_m)f(\cSetMatA)} d\cSetMatA,
	\end{equation}
	where $c_m = 2^{2m - 1}$, $\cSetMatA = (\cMatA_{p,s})_{p,s = 1}^m$ with even $p + s$, $\cMatA_{p,s}$ is a complex $\binom{m}{p} \times \binom{m}{s}$ matrix, $d\cSetMatA = \prod\limits_{\substack{p + s \text{ is even} \\ 0 \le p,s \le m}} d\cMatA_{p,s}^*d\cMatA_{p,s}$ and
	\begin{align}
		\label{f def}
		f(\cSetMatA) &= -\sum\limits_{\substack{p + s \text{ is even} \\ 0 \le p,s \le m}} \tr \cMatA_{p,s}^*\cMatA_{p,s} + \log h(\cSetMatA); \\
		\notag
		g(\cSetMatA) &= (h(\cSetMatA)^{c_m} + n^{-1/2}\mathtt{p}_a(\cSetMatA) ) \exp\Bigg\{-c_m\sum\limits_{\substack{p + s \text{ is even} \\ 0 \le p,s \le m}} \tr \cMatA_{p,s}^*\cMatA_{p,s} \Bigg\}; \\
		\label{h def}
		h(\cSetMatA) &= \det A + n^{-1/2} \tilde{h}(\cMatA_2) + n^{-1}\mathtt{p}_c(\hat{\cSetMatA}); \\
		\label{A def}
		A &= A(\cMatA_1) = \begin{pmatrix}
			-Z 			& \cMatA_{1} \\
			-\cMatA_{1}^* 	& -Z^*
		\end{pmatrix}
	\end{align}
	with $\mathtt{p}_a(\cSetMatA)$, $\mathtt{p}_c(\hat{\cSetMatA})$ and $\tilde{h}(\cMatA_2)$ being certain polynomials specified in the proof below, and $\hat{\cSetMatA}$ containing all $\cMatA_{p,s}$ except $\cMatA_1$.
\end{prop}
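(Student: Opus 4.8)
The plan is to apply the supersymmetry (fermionic) method in four stages: represent the product of determinants as a Gaussian integral over anticommuting variables, average over the entries of $\ens$ using their independence, decouple the resulting self-interaction of the anticommuting variables by a Hubbard--Stratonovich transformation that introduces the commuting matrices $\cMatA_{p,s}$, and finally integrate the anticommuting variables out. First I would write $\det(\ens - z_j)(\ens - z_j)^* = \det(\ens - z_j)\,\overline{\det(\ens - z_j)}$ for each $j$ and use the Gaussian Grassmann identity to introduce two families of anticommuting vectors $\aVecB_j, \aVecC_j \in \Compl^n$ ($j = 1, \dotsc, m$), one for the holomorphic factor and one for its conjugate, so that $\CF_m = \E \int \exp\bigl(-\sum_j [\aConjMat{\aVecB_j}(\ens - z_j)\aVecB_j + \aConjMat{\aVecC_j}(\ens - z_j)^*\aVecC_j]\bigr)\, \prod_j d\aConjMat{\aVecB_j}\,d\aVecB_j\, d\aConjMat{\aVecC_j}\,d\aVecC_j$. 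Since the $x_{ab}$ are i.i.d., the expectation factorizes over the pairs $(a,b)$, each factor being $\E\exp(\alpha_{ab} x_{ab} + \beta_{ab}\bar x_{ab})$ with $\alpha_{ab}, \beta_{ab}$ the nilpotent, Grassmann-even coefficients collected from the exponent (of order $n^{-1/2}$ because $\ens = n^{-1/2}\cMatGin$). Using \eqref{moments}, the logarithm of this factor equals the second-cumulant term $\alpha_{ab}\beta_{ab}$ plus corrections carrying the higher cumulants; note that $\alpha_{ab}^k$ and $\beta_{ab}^k$ are nonzero up to $k = m$ (they are grade-$k$ antisymmetric combinations in the flavor index), so cumulants up to order $2m$ enter, which is exactly the moment hypothesis of Theorem~\ref{th1}, with the fourth cumulant $\cumul{2}{2}$ producing the grade-two contribution responsible for the factor $e^{\frac{m^2-m}{2}(1-|z_0|^2)^2\cumul{2}{2}}$.

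Summing these terms over $(a,b)$ yields a self-interaction of the anticommuting variables expressed through the overlaps $\sum_a \aConjScl{\aSclB_{aj}}\aSclC_{aj'}$ and their higher antisymmetric powers. The crucial step is to identify the natural collective variables: because the overlaps are built from anticommuting entries, their products are antisymmetric in the flavor indices and organize themselves by the exterior (particle-number) grading of the $m$ flavors, a $2^m$-dimensional graded space whose grade-$p$ component has dimension $\binom{m}{p}$. Decoupling the interaction by a Hubbard--Stratonovich transformation then introduces one complex auxiliary block $\cMatA_{p,s}$ of size $\binom{m}{p}\times\binom{m}{s}$ for each admissible pair of grades, with a parity selection rule forcing $p + s$ to be even; the total count is $\sum_{p+s \text{ even}}\binom{m}{p}\binom{m}{s} = 2^{2m-1} = c_m$, which accounts for the Gaussian weight $\exp(-n\sum \tr \cMatA_{p,s}^*\cMatA_{p,s})$ and the prefactor $(n/\pi)^{c_m}$.

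After the transformation the action is bilinear in the anticommuting variables and diagonal in the spatial index $a = 1, \dotsc, n$, so the Grassmann integral factorizes into $n$ identical single-site factors. Evaluating one factor gives $h(\cSetMatA)$, whose leading part is $\det A$ with $A = A(\cMatA_1)$ the $2m \times 2m$ matrix \eqref{A def} built from $-Z$, $-Z^*$ and the grade-one field $\cMatA_1$, while the higher-grade fields and the non-Gaussian cumulants supply the subleading pieces $n^{-1/2}\tilde h(\cMatA_2)$ and $n^{-1}\mathtt{p}_c(\hat{\cSetMatA})$ of \eqref{h def}. Raising to the $n$-th power produces $e^{n\log h}$; combining this with the Gaussian weight, extracting from the $n^{-1/2}$ cumulant corrections the polynomial prefactor $\mathtt{p}_a(\cSetMatA)$, and splitting the exponent as $(n - c_m)f(\cSetMatA) + c_m\log h - c_m\sum \tr \cMatA_{p,s}^*\cMatA_{p,s}$ recasts the integrand exactly as $g(\cSetMatA)\,e^{(n-c_m)f(\cSetMatA)}$ with $f$, $g$ as in \eqref{f def}, giving \eqref{IR result}.

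The main obstacle is the combinatorial content of the third stage: proving that the decoupling is governed by the exterior-graded ($2^m$-dimensional) structure, establishing the parity rule ``$p + s$ even'', and checking that the Hubbard--Stratonovich step reproduces precisely the block sizes $\binom{m}{p}\times\binom{m}{s}$ and the count $c_m = 2^{2m-1}$. A second, more bookkeeping-heavy difficulty is controlling the non-Gaussian corrections: one must track how the cumulants of orders $3$ through $2m$ enter, through both the averaging and the Grassmann integration, at the correct powers of $n^{-1/2}$, and verify that they assemble into the stated polynomials $\tilde h(\cMatA_2)$, $\mathtt{p}_a(\cSetMatA)$ and $\mathtt{p}_c(\hat{\cSetMatA})$ without leaving uncontrolled lower-order remainders.
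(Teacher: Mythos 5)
Your outline follows the same route as the paper (Grassmann representation of the determinants, factorized averaging and cumulant expansion, Hubbard--Stratonovich decoupling organized by the exterior grading of the $m$ flavors, site-wise factorization, final integration), and your block count $\sum_{p+s \text{ even}}\binom{m}{p}\binom{m}{s} = 2^{2m-1} = c_m$ is right. But there is one genuine structural gap: your claimed ``parity selection rule forcing $p+s$ to be even'' is false. The cumulant terms with odd $p+s$ do not vanish; for such $(p,s)$ the two collective variables $\sum_k (-1)^p\prod_r \aSclC_{k\beta_r}\prod_q \aConjScl{\aSclB_{k\alpha_q}}$ and $\sum_k \prod_q \aSclB_{k\alpha_q}\prod_r \aConjScl{\aSclC_{k\beta_r}}$ are Grassmann-\emph{odd}, and precisely for that reason they cannot be decoupled by a commuting Gaussian field: the reverse application of \eqref{Gauss int} requires Grassmann-even sources. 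The paper instead decouples the odd sector by the Grassmann analogue \eqref{Grass int}, introducing \emph{anticommuting} auxiliary matrices $\aMatA_{p,s}$ (of the same block sizes) alongside the commuting $\cMatA_{p,s}$. These fields survive the single-site integration inside a polynomial $\mathtt{p}_a^{(2)}(\aSetMatA,\cSetMatA)$ whose monomials are at least quadratic in $\aSetMatA$, and they are integrated out only at the very end; that final integration is exactly the source of the correction $n^{-1/2}\mathtt{p}_a(\cSetMatA)$ appearing in $g$. In your version, having discarded the odd sector, the prefactor $\mathtt{p}_a$ has no origin, so the derivation cannot reach \eqref{IR result} as stated --- you are implicitly proving the representation only for distributions whose odd joint cumulants $\cumul{p}{s}$, $p+s$ odd, all vanish.

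A secondary, fixable imprecision: after site factorization the per-site value is not $h(\cSetMatA)$ but $h(\cSetMatA) + n^{-3/2}\mathtt{p}_a^{(2)}(\aSetMatA,\cSetMatA)$, so raising to the $n$-th power does not simply give $e^{n\log h}$; one expands binomially and the sum truncates at $k = c_m$ because there are only $2c_m$ anticommuting auxiliary generators and every monomial of $\mathtt{p}_a^{(2)}$ is at least quadratic in $\aSetMatA$ --- this nilpotency-based truncation is what keeps the whole correction at overall order $n^{-1/2}$ and produces the stated form of $g$. Likewise, identifying the $n^{-1/2}$ coefficient in \eqref{h def} as $\tilde h(\cMatA_2)$ alone requires the counting argument that monomials with unequal numbers of conjugated and unconjugated Grassmann variables integrate to zero, which kills the $(p,s)\ne(2,2)$, $p+s=4$ contributions, and a parity argument killing all odd-degree monomials; your plan asserts the outcome without these mechanisms. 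These bookkeeping points can be repaired along standard lines, but the missing odd-sector Grassmann Hubbard--Stratonovich fields is a gap in the method itself, not in the bookkeeping.
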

\begin{rem}
	Let $\cMatA_1 = U\Lambda V^*$ be the singular value decomposition of the matrix $\cMatA_1$, i.e.\ $\Lambda = \diag\{\lambda_j\}_{j = 1}^m$, $\lambda_j \ge 0$, $U, V \in U(m)$. In order to perform asymptotic analysis let us change the variables $\cMatA_1 = U\Lambda V^*$ in \eqref{IR result}. Since the Jacobian is $\frac{2^m\pi^{m^2}}{\left(\prod_{j = 1}^{m - 1} j!\right)^2}\Vanddet^2(\Lambda^2) \prod\limits_{j = 1}^m \lambda_j$ (see e.g.\ \cite{Hu:63}) we obtain
	\begin{equation}\label{IR SVD}
	\begin{split}
	\CF_m = Cn^{c_m} &\int\limits_\idom \Vanddet^2(\Lambda^2) \prod\limits_{j = 1}^m \lambda_j \left[ g_0(\Lambda, \hat{\cSetMatA}) + \frac{1}{\sqrt{n}}g_r(U\Lambda V^*, \hat{\cSetMatA}) \right] \\
	&\times \exp\left\{(n - c_m)\left[ f_0(\Lambda, \hat{\cSetMatA}) + \frac{1}{\sqrt{n}}f_r(U\Lambda V^*, \hat{\cSetMatA}) \right]\right\} d\mu(U) d\mu(V) d\Lambda d\hat{\cSetMatA},
	\end{split}
	\end{equation}
	where $\idom = \{(\Lambda, U, V, \hat{\cSetMatA}) \mid \lambda_j \ge 0,\, j = 1, \dotsc, m,\, U, V \in U(m)\}$, $\mu$ is a Haar measure, $d\Lambda = \prod\limits_{j = 1}^m d\lambda_j$ and
	\begin{align}
		\label{f_0 def}
		f_0(\cSetMatA) &= -\sum\limits_{\substack{p + s \text{ is even} \\ 0 \le p,s \le m}} \tr \cMatA_{p,s}^*\cMatA_{p,s} + \log h_0(\cMatA_1); \\
		\notag
		g_0(\cSetMatA) &= h_0(\cMatA_1)^{c_m} \exp\Bigg\{-c_m\sum\limits_{\substack{p + s \text{ is even} \\ 0 \le p,s \le m}} \tr \cMatA_{p,s}^*\cMatA_{p,s} \Bigg\} = e^{c_m f_0(\cSetMatA)}; \\
		\label{h_0 def}
		h_0(\cMatA_1) &= \det \left(A + \frac{1}{\sqrt{n}}\begin{pmatrix}
			\cMatPos	& 0 		\\
			0		& \cMatPos^*
		\end{pmatrix}\right) = \prod\limits_{j = 1}^m (\abs{z_0}^2 + \lambda_j^2); \\
		\label{f_r def}
		f_r(\cSetMatA) &= \sqrt{n}(f(\cSetMatA) - f_0(\cSetMatA)); \\
		\notag 
		g_r(\cSetMatA) &= \sqrt{n}(g(\cSetMatA) - g_0(\cSetMatA)).
	\end{align}
	Notice that $f_0(U\Lambda V^*, \hat{\cSetMatA}) = f_0(\Lambda, \hat{\cSetMatA})$ and the same for $g_0$.
\end{rem}

\begin{rem}\label{rem:case m = 1}
	In the special case $m = 1$ we have
	\begin{equation*} 
		\CF_1(z) = \frac{n}{\pi} \int\exp \left\{n(-\abs{q}^2+\log(\abs{z}^2+\abs{q}^2))\right\} d\bar{q}dq.
	\end{equation*}
	Changing variables to polar coordinates and performing a simple Laplace integration, we obtain
	\begin{equation}\label{F_1 behavior}
	\CF_1(z) = 2n \int\limits_0^{+\infty} r\exp \left\{n(-r^2+\log(\abs{z}^2+r^2))\right\} dr = \sqrt{2\pi n}\, e^{n(\abs{z}^2 - 1)}(1 + o(1)).
	\end{equation}
\end{rem}

\begin{rem}
	In the Gaussian case the representations \eqref{IR result} and \eqref{IR SVD} become much more simple and have the form
	\begin{equation}\label{IR result Gauss}
	\begin{split}
	\CF_m &= \left(\frac{n}{\pi}\right)^{m^2} \int e^{nf(Q_1)} dQ_1^*dQ_1 \\
	&= Cn^{m^2} \int\limits_{\R_+^m} \int\limits_{U(m)} \int\limits_{U(m)} \Vanddet^2(\Lambda^2) \prod\limits_{j = 1}^m \lambda_j \times e^{nf(U\Lambda V^*)} d\mu(U) d\mu(V) d\Lambda,
	\end{split}
	\end{equation}
	where
	\begin{equation}\label{f Gauss}
	f(Q_1) = -\tr Q_1^*Q_1 + \log \det A.
	\end{equation}
	
\end{rem}

\subsection{Proof of Proposition \ref{prop:IR}}

The proof is strongly relied on the SUSY techniques. A reader who is not familiar with Grassmann variables can find all the necessary facts in~\cite{Ef:98} of \cite{Ef:83}. For more serious introduction to SUSY see \cite{Be:87}.

The key formulas of the subsection are well-known Gaussian integration formula
\begin{equation} \label{Gauss int}
\int\limits_{\Compl^n} \exp\left\{-\cVecN^*B\cVecN - \cVecN^*\cVecGI_2 - \cVecGI_1^*\cVecN\right\} d\cVecN^*d\cVecN = 
\pi^n {\det}^{-1} B \exp \{\cVecGI_1^*B^{-1}\cVecGI_2\},
\end{equation}
valid for any positive definite matrix $B$ and even Grassmann variables (i.e.\ sums of products of even number of Grassmann variables) $\cVecGI_1$, $\cVecGI_2$, and its Grassmann analog
\begin{gather}
	\label{Grass int}
	\int \exp\left\{-\aConjMat{\aVecH}B\aVecH - \aConjMat{\aVecH}\aVecGrI_2 - \aConjMat{\aVecGrI_1}\aVecH\right\} d\aConjMat{\aVecH}d\aVecH = 
	\det B \exp \{\aConjMat{\aVecGrI_1}B^{-1}\aVecGrI_2\} 
\end{gather}
valid for arbitrary complex matrix $B$ and odd Grassmann variables (i.e.\ sums of products of odd number of Grassmann variables) $\aConjMat{\aVecGrI_1}$, $\aVecGrI_2$. Rewrite the expression \eqref{F_m} for $\CF_m$ using~\eqref{Grass int} and~\eqref{Gin ens}
\begin{align*}
	\CF_m = \E \Bigg\{
	\int \exp \Bigg\{ -\sum\limits_{j = 1}^{m} \aConjMat{\aVecB_j}\left(\frac{1}{\sqrt{n}}\cMatGin - z_j\right)\aVecB_j - \sum\limits_{j = 1}^{m} \aConjMat{\aVecC_j}\left(\frac{1}{\sqrt{n}}\cMatGin - z_j\right)^*\aVecC_j \Bigg\} d\aMatB d\aMatC
	\Bigg\},
\end{align*}
where $\aVecB_j$, $\aVecC_j$, $j = 1, \dotsc, m$ are $n$-dimensional vectors with components $\aSclB_{kj}$ and $\aSclC_{kj}$ respectively, $d\aMatB = \prod\limits_{j = 1}^{m} d\aVecB_j^+ d\aVecB_j$ and $d\aMatC = \prod\limits_{j = 1}^{m} d\aConjMat{\aVecC_j} d\aVecC_j$. The terms in the exponent can be rearranged as following
\begin{align*}
	-\sum\limits_{j = 1}^m \aConjMat{\aVecB_j}\cMatGin\aVecB_j &= -\tr \aConjMat{\aMatB}\cMatGin\aMatB = \tr \aMatB\aConjMat{\aMatB}\cMatGin = \sum\limits_{k,l = 1}^n (\aMatB\aConjMat{\aMatB})_{lk}x_{kl}, \\
	-\sum\limits_{j = 1}^m \aConjMat{\aVecC_j}\cMatGin^*\aVecC_j &= -\tr \aConjMat{\aMatC}\cMatGin^*\aMatC = \tr \aMatC\aConjMat{\aMatC}\cMatGin^* = \sum\limits_{k,l = 1}^n (\aMatC\aConjMat{\aMatC})_{kl}\bar{x}_{kl}, \\
	\sum\limits_{j = 1}^m \aConjMat{\aVecB_j}z_j\aVecB_j &= \sum\limits_{j = 1}^m \sum\limits_{k = 1}^n \aConjScl{\aSclB_{kj}}z_j\aSclB_{kj} = \sum\limits_{k = 1}^n \sum\limits_{j = 1}^m \aConjScl{\aSclB_{kj}}z_j\aSclB_{kj} = \sum\limits_{k = 1}^n \aConjMat{\aVecBt_k}Z\aVecBt_k, \\
	\sum\limits_{j = 1}^m \aConjMat{\aVecC_j}\bar{z}_j\aVecC_j &= \sum\limits_{j = 1}^m \sum\limits_{k = 1}^n \aSclC_{kj}^*\bar{z}_j\aSclC_{kj} = \sum\limits_{k = 1}^n \sum\limits_{j = 1}^m \aSclC_{kj}^*\bar{z}_j\aSclC_{kj} = \sum\limits_{k = 1}^n \aConjMat{\aVecCt_k}Z^*\aVecCt_k,
\end{align*}
where $\aMatC$ and $\aMatB$ are matrices composed of columns $\aVecC_1, \ldots, \aVecC_m$ and $\aVecB_1, \ldots, \aVecB_m$ respectively, $\aVecBt_k = (\aMatB^T)_k$, $\aVecCt_k = (\aMatC^T)_k$, $Z$~is defined in \eqref{Z def}. Hence
\begin{multline} \label{IR 1}
	\CF_m = \E \Bigg\{
	\int \exp \Bigg\{ \sum\limits_{k = 1}^n \aConjMat{\aVecBt_k}Z\aVecBt_k + \sum\limits_{k = 1}^n \aConjMat{\aVecCt_k}Z^*\aVecCt_k \\
	+ \frac{1}{\sqrt{n}} \sum\limits_{k,l = 1}^n (\aMatB\aMatB^+)_{lk}x_{kl} + \frac{1}{\sqrt{n}} \sum\limits_{k,l = 1}^n (\aMatC\aConjMat{\aMatC})_{kl}\bar{x}_{kl} \Bigg\} d\aMatB d\aMatC
	\Bigg\}.
\end{multline}

To simplify the reading, the remaining steps are first explained in the case when the entries of $\cMatGin$ are Gaussian.

\subsubsection{Gaussian case}

Taking the expectation in \eqref{IR 1} we get
\begin{align*}
	\CF_m = \int& \exp \Bigg\{ \sum\limits_{k = 1}^n \aConjMat{\aVecBt_k}Z\aVecBt_k + \sum\limits_{k = 1}^n \aConjMat{\aVecCt_k}Z^*\aVecCt_k 
	+ \sum\limits_{k,l = 1}^n \frac{1}{n} (\aMatB\aConjMat{\aMatB})_{lk} (\aMatC\aConjMat{\aMatC})_{kl} \Bigg\} d\aMatB d\aMatC.
\end{align*}
Notice that 
\begin{equation*}
	\sum\limits_{k,l = 1}^n (\aMatB\aConjMat{\aMatB})_{lk} (\aMatC\aConjMat{\aMatC})_{kl} = \tr \aMatB\aConjMat{\aMatB}\aMatC\aConjMat{\aMatC} = - \tr \aMatC^T\aConjMat{(\aMatB^T)}\aMatB^T\aConjMat{(\aMatC^T)}.
\end{equation*}
Then the Hubbard--Stra\-to\-no\-vich transformation is applied. The transformation is an application of \eqref{Gauss int} in the reverse direction. It yields
\begin{equation}\label{IR Gauss case 1}
\begin{split}
\CF_m = \left(\frac{n}{\pi}\right)^{m^2} \int \exp \Bigg\{ &\sum\limits_{k = 1}^n \aConjMat{\aVecBt_k}Z\aVecBt_k + \sum\limits_{k = 1}^n \aConjMat{\aVecCt_k}Z^*\aVecCt_k + \tr\aMatC^T\aConjMat{(\aMatB^T)}\cMatA_1 \\
&- \tr \cMatA_1^*\aMatB^T\aConjMat{(\aMatC^T)} - n\tr \cMatA_1^*\cMatA_1 \Bigg\} d\aMatB d\aMatC dQ_1^*dQ_1,
\end{split}
\end{equation}
where $\cMatA_1$ is a $m \times m$ matrix. Transforming the terms
\begin{align*}
	\tr \aMatC^T\aConjMat{(\aMatB^T)}\cMatA_1 &= - \tr \aConjMat{(\aMatB^T)}\cMatA_1\aMatC^T = - \sum\limits_{k = 1}^n \aVecBt_k^+\cMatA_1\aVecCt_k, \\
	\tr \cMatA_1^*\aMatB^T\aConjMat{(\aMatC^T)} &= - \tr \aConjMat{(\aMatC^T)}\cMatA_1^*\aMatB^T = - \sum\limits_{k = 1}^n \aConjMat{\aVecCt_k}\cMatA_1^*\aVecBt_k,
\end{align*}
one can rewrite \eqref{IR Gauss case 1} in the form
\begin{equation*}
	\CF_m = \left(\frac{n}{\pi}\right)^{m^2} \int d\cMatA_1^*d\cMatA_1 e^{- n\tr \cMatA_1^*\cMatA_1} \prod\limits_{k = 1}^{n} \int e^{-\aConjMat{\aVecD_k}A\aVecD_k} d\aVecBt_k^+ d\aVecBt_k d\aConjMat{\aVecCt_k} d\aVecCt_k,
\end{equation*}
where $A$ is defined in \eqref{A def} and
\begin{equation}\label{Psi def}
\aVecD_k = \begin{pmatrix}
\aVecBt_k \\
\aVecCt_k
\end{pmatrix}.
\end{equation}
Finally, integration via \eqref{Grass int} leads us to \eqref{IR result Gauss}.

\subsubsection{General case}

In order to treat the general case let us introduce a notation for a kind of ``Laplace--Fourier transform''
\begin{equation*}
	\charfp{t_1}{t_2} := \E \left\{
	e^{t_1x_{11} + t_2\bar{x}_{11}}
	\right\}.
\end{equation*}
Then the expectation in \eqref{IR 1} can be written in the following form
\begin{align*}
	\CF_m = \int& \prod\limits_{k,l = 1}^n \charfp{\frac{1}{\sqrt{n}} (\aMatB\aConjMat{\aMatB})_{lk}}{\frac{1}{\sqrt{n}} (\aMatC\aConjMat{\aMatC})_{kl}} \\
	&\times \exp \Bigg\{ \sum\limits_{k = 1}^n \aVecBt_k^+Z\aVecBt_k + \sum\limits_{k = 1}^n \aConjMat{\aVecCt_k}Z^*\aVecCt_k \Bigg\} d\aMatB d\aMatC \\
	= \int& \exp \Bigg\{ \sum\limits_{k = 1}^n \aVecBt_k^+Z\aVecBt_k + \sum\limits_{k = 1}^n \aConjMat{\aVecCt_k}Z^*\aVecCt_k \\
	&\phantom{ \exp \Bigg\{} + \sum\limits_{k,l = 1}^n \log\charfp{\frac{1}{\sqrt{n}} (\aMatB\aConjMat{\aMatB})_{lk}}{\frac{1}{\sqrt{n}} (\aMatC\aConjMat{\aMatC})_{kl}} \Bigg\} d\aMatB d\aMatC.
\end{align*}
Expansion of $\log\Phi$ into series gives us
\begin{align}
	\notag
	\CF_m = \int \exp \Bigg\{& \sum\limits_{k = 1}^n \aVecBt_k^+Z\aVecBt_k + \sum\limits_{k = 1}^n \aConjMat{\aVecCt_k}Z^*\aVecCt_k \\
	\label{IR 2}
	&+ \sum\limits_{k,l = 1}^n \sum\limits_{p,s = 0}^m \frac{\cumul{p}{s}}{p!s!}  \frac{1}{n^{(p + s)/2}} \left((\aMatB\aConjMat{\aMatB})_{lk}\right)^p\left((\aMatC\aConjMat{\aMatC})_{kl}\right)^s \Bigg\} d\aMatB d\aMatC,
\end{align}
with
\begin{align*} 
	\cumul{p}{s} = \left.\frac{\partial^{p + s}}{\partial^p t_1 \partial^s t_2} \log\charfp{t_1}{t_2}\right|_{t_1 = t_2 = 0}.
\end{align*}
In particular, 
\begin{equation}\label{cumul values}
\begin{split}
\cumul{0}{0} &= 0; \\
\cumul{1}{0} &= \conj{\cumul{0}{1}} = \E\{x_{11}\} = 0; \\
\cumul{2}{0} &= \conj{\cumul{0}{2}} = \E\{x_{11}^2\} - \E^2\{x_{11}\} = 0; \\
\cumul{1}{1} &= \E\{\abs{x_{11}}^2\} - \abs{\E\{x_{11}\}}^2 = 1.
\end{split}
\end{equation}
Let us transform the terms in the exponent again
\begin{align}
	\notag
	\sum\limits_{k,l = 1}^n &\left((\aMatB\aConjMat{\aMatB})_{lk}\right)^p\left((\aMatC\aConjMat{\aMatC})_{kl}\right)^s \\
	\notag
	&= \sum\limits_{k,l = 1}^n \Bigg(\sum\limits_{j = 1}^m \aSclB_{lj}^{\phantom{+}}\aConjScl{\aSclB_{kj}}\Bigg)^p\Bigg(\sum\limits_{j = 1}^m \aSclC_{kj}^{\phantom{+}}\aConjScl{\aSclC_{lj}}\Bigg)^s = p!s! \sum\limits_{k,l = 1}^n \sum\limits_{\substack{\alpha \in \indexset_{m, p} \\ \beta \in \indexset_{m, s}}} \prod\limits_{q = 1}^{p} \aSclB_{l\alpha_{q}}^{\phantom{+}}\aConjScl{\aSclB_{k\alpha_{q}}} \prod\limits_{r = 1}^{s}\aSclC_{k\beta_{r}}^{\phantom{+}}\aConjScl{\aSclC_{l\beta_{r}}} \\
	\notag
	&= (-1)^{p^2} p!s! \sum\limits_{k,l = 1}^n \sum\limits_{\substack{\alpha \in \indexset_{m, p} \\ \beta \in \indexset_{m, s}}} \prod\limits_{r = s}^{1} \aSclC_{k\beta_{r}}^{\phantom{+}} \prod\limits_{q = p}^{1} \aConjScl{\aSclB_{k\alpha_{q}}} \prod\limits_{q = 1}^{p} \aSclB_{l\alpha_{q}}^{\phantom{+}} \prod\limits_{r = 1}^{s} \aConjScl{\aSclC_{l\beta_{r}}} \\
	\label{sum order change}
	&= p!s! \sum\limits_{\substack{\alpha \in \indexset_{m, p} \\ \beta \in \indexset_{m, s}}} \Bigg(\sum\limits_{k = 1}^n (-1)^p\prod\limits_{r = s}^{1} \aSclC_{k\beta_{r}}^{\phantom{+}} \prod\limits_{q = p}^{1} \aConjScl{\aSclB_{k\alpha_{q}}}\Bigg) \Bigg(\sum\limits_{k = 1}^n \prod\limits_{q = 1}^{p} \aSclB_{k\alpha_{q}}^{\phantom{+}} \prod\limits_{r = 1}^{s} \aConjScl{\aSclC_{k\beta_{r}}}\Bigg),
\end{align}
where 
\begin{equation} \label{indexset def}
\indexset_{m, p'} = \{\alpha \in \Z^{p'} \mid 1 \leq \alpha_1 < \ldots < \alpha_{p'} \leq m\}
\end{equation}
At this point the Hubbard--Stra\-to\-no\-vich transformation is applied. As it was mentioned before, the transformation is an employment of \eqref{Gauss int} or \eqref{Grass int} in the reverse direction. It yields for even $p + s$
\begin{align}
	\notag
	\exp \Bigg\{& \cumul{p}{s} n^{-(p + s)/2} \Bigg(\sum\limits_{k = 1}^n (-1)^p\prod\limits_{r = s}^{1} \aSclC_{k\beta_{r}}^{\phantom{+}} \prod\limits_{q = p}^{1} \aConjScl{\aSclB_{k\alpha_{q}}}\Bigg) \Bigg(\sum\limits_{k = 1}^n \prod\limits_{q = 1}^{p} \aSclB_{k\alpha_{q}} \prod\limits_{r = 1}^{s} \aConjScl{\aSclC_{k\beta_{r}}}\Bigg) \Bigg\} \\
	\notag
	&= \frac{n}{\pi} \int \exp \Bigg\{
	- n^{-\frac{p + s - 2}{4}} \sum\limits_{k = 1}^n \tilde{\sclA}_{\beta\alpha}^{(k,p,s)} \cSclA_{\alpha\beta}^{(p,s)} - n^{-\frac{p + s - 2}{4}} \sum\limits_{k = 1}^n \cConjScl{\cSclA}_{\alpha\beta}^{(p,s)} \sclA_{\alpha\beta}^{(k,p,s)} - n\abs{\cSclA_{\alpha\beta}^{(p,s)}}^2 \Bigg\} \\
	\label{HS commuting}
	&\times d\cConjScl{\cSclA}_{\alpha\beta}^{(p,s)}d\cSclA_{\alpha\beta}^{(p,s)},
\end{align}
where
\begin{equation}\label{chi def}
\begin{split}
\tilde{\sclA}_{\beta\alpha}^{(k,p,s)} &= \sqrt{\cumul{p}{s}}(-1)^p\prod\limits_{r = s}^{1} \aSclC_{k\beta_{r}}^{\phantom{+}} \prod\limits_{q = p}^{1} \aConjScl{\aSclB_{k\alpha_{q}}}; \\
\sclA_{\alpha\beta}^{(k,p,s)}&= \sqrt{\cumul{p}{s}} \prod\limits_{q = 1}^{p} \aSclB_{k\alpha_{q}} \prod\limits_{r = 1}^{s} \aConjScl{\aSclC_{k\beta_{r}}}.
\end{split}
\end{equation}
Here and below we take a branch of the square root such that its argument is in~$\left[ 0, \pi \right)$. Similarly, for odd $p + s$ we have
\begin{align}
	\notag
	\exp \Bigg\{ \cumul{p}{s} n^{-(p + s)/2}& \Bigg(\sum\limits_{k = 1}^n (-1)^p\prod\limits_{r = 1}^{s} \aSclC_{k\beta_{r}}^{\phantom{+}} \prod\limits_{q = 1}^{p} \aConjScl{\aSclB_{k\alpha_{q}}}\Bigg) \Bigg(\sum\limits_{k = 1}^n \prod\limits_{q = 1}^{p} \aSclB_{k\alpha_{q}} \prod\limits_{r = 1}^{s} \aConjScl{\aSclC_{k\beta_{r}}}\Bigg) \Bigg\} \\
	\notag
	= \int \exp \Bigg\{
	&- n^{-\frac{p + s}{4}} \sum\limits_{k = 1}^n \tilde{\sclA}_{\beta\alpha}^{(k,p,s)} \aSclA_{\alpha\beta}^{(p,s)} - n^{-\frac{p + s}{4}} \sum\limits_{k = 1}^n \aConjScl{\left(\aSclA_{\alpha\beta}^{(p,s)}\right)} \sclA_{\alpha\beta}^{(k,p,s)} \\
	\label{HS anticommuting}
	&- \aConjScl{\left(\aSclA_{\alpha\beta}^{(p,s)}\right)}\aSclA_{\alpha\beta}^{(p,s)} \Bigg\} d\aConjScl{\left(\aSclA_{\alpha\beta}^{(p,s)}\right)}d\aSclA_{\alpha\beta}^{(p,s)}.
\end{align}

Then the combination of \eqref{IR 2}, \eqref{sum order change}, \eqref{HS commuting} and \eqref{HS anticommuting} gives us
\begin{align}
	\CF_m = \left(\frac{n}{\pi}\right)^{c_m} \int& \prod\limits_{k = 1}^n \mathsf{j}_k \prod\limits_{\substack{p + s \text{ is odd} \\ 0 \le p,s \le m}} e^{-\tr\aConjMat{\aMatA_{p,s}}\aMatA_{p,s}} d\aConjMat{\aMatA_{p,s}} d\aMatA_{p,s} \prod\limits_{\substack{p + s \text{ is even} \\ 0 \le p,s \le m}} e^{-n\tr \cMatA_{p,s}^*\cMatA_{p,s}}d\cMatA_{p,s}^*d\cMatA_{p,s}
	\label{after HS}
\end{align}
where 
\begin{align}
	\label{j_k def}
	\begin{split}
		\mathsf{j}_k &= \int \exp \left\{ b_{k,2} + n^{-1/2}b_{k,4} + n^{-3/4}\mathtt{p}_a^{(1)}(\aSetMatA, \aMatB, \aMatC) + n^{-1}\mathtt{p}_c^{(1)}(\hat{\cSetMatA}, \aMatB, \aMatC) \right\} \\
		&\phantom{= \int} \times 
		d\aConjMat{\aVecBt_k} d\aVecBt_k d\aConjMat{\aVecCt_k} d\aVecCt_k,
	\end{split} \\
	\notag
	b_{k,2} &= - \left(\tr \tilde{\matA}_{k,1,1} \cMatA_{1,1} + \tr \cMatA_{1,1}^* \matA_{k,1,1}\right) + \aVecBt_k^+Z\aVecBt_k + \aConjMat{\aVecCt_k}Z^*\aVecCt_k, \\
	\label{S_k,4 def}
	b_{k,4} &= - \sum\limits_{p + s = 4} \left(\tr \tilde{\matA}_{k,p,s} \cMatA_{p,s} + \tr \cMatA_{p,s}^* \matA_{k,p,s}\right), \\
	\notag
	\mathtt{p}_a^{(1)}(\aSetMatA, \aMatB, \aMatC) &= - \sum\limits_{j = 2}^m n^{-(j - 2)/2} \sum\limits_{p + s = 2j - 1} \left(\tr \tilde{\matA}_{k,p,s} \aMatA_{p,s} + \tr \aConjMat{\aMatA_{p,s}} \matA_{k,p,s}\right), \\
	\notag
	\mathtt{p}_c^{(1)}(\hat{\cSetMatA}, \aMatB, \aMatC) &= - \sum\limits_{j = 3}^m n^{-(j - 3)/2} \sum\limits_{p + s = 2j} \left(\tr \tilde{\matA}_{k,p,s} \cMatA_{p,s} + \tr \cMatA_{p,s}^* \matA_{k,p,s}\right).
\end{align}
In the formulas above $\aMatA_{p,s}$, 
$\cMatA_{p,s}$, $\tilde{\matA}_{k,p,s}$ and $\matA_{k,p,s}$ are matrices whose entries are $\aSclA_{\alpha\beta}^{(p,s)}$, 
$\cSclA_{\alpha\beta}^{(p,s)}$, $\tilde{\sclA}_{\beta\alpha}^{(k,p,s)}$ and $\sclA_{\alpha\beta}^{(k,p,s)}$ respectively. The rows and columns are indexed by elements of the set $\indexset_{m,p}$ for corresponding $p$ (or $s$) in the lexicographical order. Note also that $\mathtt{p}_a^{(1)}$ and $\mathtt{p}_c^{(1)}$ are the first degree homogeneous polynomials of the entries of $\aSetMatA$ and $\hat{\cSetMatA}$ respectively, where $\hat{\cSetMatA}$ contains all the $\cMatA_{p,s}$ except $\cMatA_1$. One more thing we need is that all the monomials of $\mathtt{p}_a^{(1)}$ have odd degree w.r.t.\ $\aVecBt_k$ and $\aVecCt_k$, and all the monomials of $\mathtt{p}_c^{(1)}$ have even degree w.r.t.\ $\aVecBt_k$ and $\aVecCt_k$.

Fortunately, the integral in \eqref{after HS} over $\aMatB$ and $\aMatC$ factorizes. Therefore the integration can be performed over $\aVecBt_k$ and $\aVecCt_k$ separately for every $k$. Lemma \ref{lem:int over psi,tau} provides a corresponding result.
\begin{lem}\label{lem:int over psi,tau}
	Let $\mathsf{j}_k$ be defined by \eqref{j_k def}. Then
	\begin{equation}\label{j_k value}
	\mathsf{j}_k = \det A + n^{-1/2} \tilde{h}(\cMatA_2) + n^{-1}\mathtt{p}_c(\hat{\cSetMatA}) + n^{-3/2}\mathtt{p}_{a}^{(2)}(\aSetMatA, \cSetMatA),
	\end{equation}
	where A is defined in \eqref{A def},
	\begin{equation}\label{tilde h def}
	\tilde{h}(\cMatA_2) = -\int \left(\tr \tilde{\matA}_{k,2,2} \cMatA_{2} + \tr \cMatA_{2}^* \matA_{k,2,2}\right) e^{b_{k,2}} d\aConjMat{\aVecBt_k} d\aVecBt_k d\aConjMat{\aVecCt_k} d\aVecCt_k,
	\end{equation}
	$\mathtt{p}_c(\hat{\cSetMatA})$ and $\mathtt{p}_{a}^{(2)}(\aSetMatA, \cSetMatA)$ are polynomials such that 
	\begin{enumerate}[label = (\roman*)]
		\item\label{P_c} $\mathtt{p}_c(0) = 0$;
		\item\label{P_a} every monomial of $\mathtt{p}_a^{(2)}$ has at least second degree w.r.t.\ $\aSetMatA$.
	\end{enumerate}
\end{lem}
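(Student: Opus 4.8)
The plan is to read $\mathsf{j}_k$ in \eqref{j_k def} as a Berezin integral over the $4m$ generators $\aConjScl{\aSclB_{kj}},\aSclB_{kj},\aConjScl{\aSclC_{kj}},\aSclC_{kj}$, $j=1,\dots,m$, which extracts the coefficient of the top monomial $\prod_{j}\aConjScl{\aSclB_{kj}}\aSclB_{kj}\aConjScl{\aSclC_{kj}}\aSclC_{kj}$. First I would factor the exponential as $e^{b_{k,2}}\exp\{n^{-1/2}b_{k,4}+n^{-3/4}\mathtt{p}_a^{(1)}+n^{-1}\mathtt{p}_c^{(1)}\}$ and expand the second factor into a finite sum of products; because a Grassmann integral is polynomial, $\mathsf{j}_k$ is automatically a polynomial in $n^{-1/4}$, so the whole task reduces to deciding which products survive integration and at which order they appear. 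Two gradings on the Grassmann algebra do the bookkeeping.

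The first is a parity: by \eqref{chi def} and the stated properties, every monomial of $b_{k,2}$, $b_{k,4}$ and $\mathtt{p}_c^{(1)}$ is even in $(\aVecBt_k,\aVecCt_k)$, while every monomial of $\mathtt{p}_a^{(1)}$ is odd. Since the top form has even degree $4m$, only products with an even number of $\mathtt{p}_a^{(1)}$ factors can survive. As the power of $\aSetMatA$ equals the number of $\mathtt{p}_a^{(1)}$ factors (these being the only pieces homogeneous of degree one in $\aSetMatA$), the $\aSetMatA$-dependent part of $\mathsf{j}_k$ has even degree in $\aSetMatA$, hence degree at least two; pulling down two $\mathtt{p}_a^{(1)}$ factors costs at least $n^{-3/4\cdot 2}=n^{-3/2}$ (the internal $n$-powers inside $\mathtt{p}_a^{(1)}$ only lower the order further). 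This produces the last summand $n^{-3/2}\mathtt{p}_a^{(2)}(\aSetMatA,\cSetMatA)$ and gives property \ref{P_a}.

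The second is a charge assigning weight $+1$ to $\aSclB,\aSclC$ and $-1$ to $\aConjScl{\aSclB},\aConjScl{\aSclC}$. The top form is charge-neutral, and by \eqref{chi def} every monomial of $b_{k,2}$ is neutral, so $e^{b_{k,2}}$ preserves the charge; hence a nonzero integral forces the remaining factors to be charge-neutral as well. In the $\aSetMatA$-free sector the $n^{0}$ term is then exactly $\int e^{b_{k,2}}\,d\aConjMat{\aVecBt_k}d\aVecBt_k d\aConjMat{\aVecCt_k}d\aVecCt_k$; rewriting $b_{k,2}=-\aConjMat{\aVecD_k}A\aVecD_k$ with $\aVecD_k$ as in \eqref{Psi def} and $A$ as in \eqref{A def} (using $\cumul{1}{1}=1$, exactly as in the Gaussian computation) and applying \eqref{Grass int} yields $\det A$. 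At order $n^{-1/2}$ only a single $b_{k,4}$ factor can occur; by \eqref{chi def} the $(p,s)$ block of \eqref{S_k,4 def} carries charge $\pm(s-p)$, which is neutral iff $p=s$, so among $p+s=4$ only $(2,2)$ contributes, giving precisely $\tilde h(\cMatA_2)$ of \eqref{tilde h def}.

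Finally I would gather the remaining $\aSetMatA$-free contributions, all of order $n^{-1}$ and smaller; each arises from at least one factor of $b_{k,4}$ or $\mathtt{p}_c^{(1)}$, and both are homogeneous in the entries of $\hat{\cSetMatA}$ (the $p+s=4$ matrices and the matrices in $\mathtt{p}_c^{(1)}$ all lie in $\hat{\cSetMatA}$). After extracting $n^{-1}$ their sum defines $\mathtt{p}_c(\hat{\cSetMatA})$, which consequently vanishes at $\hat{\cSetMatA}=0$, establishing \ref{P_c}. The main obstacle is the charge/parity bookkeeping itself—in particular checking that nothing of order $n^{-1/2}$ beyond the $(2,2)$ block survives, and that the signs and ordering of the anticommuting factors make $b_{k,2}=-\aConjMat{\aVecD_k}A\aVecD_k$ hold verbatim so that \eqref{Grass int} applies; the remaining termwise order counting is routine.
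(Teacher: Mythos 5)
Your proposal is correct and follows essentially the same route as the paper's proof: the same expansion of the exponential, the same parity argument in $\aVecBt_k,\aVecCt_k$ killing odd products and producing the $n^{-3/2}\mathtt{p}_{a}^{(2)}$ remainder with property \ref{P_a}, the same counting of ``conjugate'' versus ``non-conjugate'' Grassmann variables (your charge grading) to eliminate the $(p,s)\ne(2,2)$ blocks of $b_{k,4}$ at order $n^{-1/2}$, and the same identification $b_{k,2}=-\aConjMat{\aVecD_k}A\aVecD_k$ via \eqref{cumul values} followed by \eqref{Grass int} to obtain $\det A$, with the residual $\aSetMatA$-free terms collected into $n^{-1}\mathtt{p}_c(\hat{\cSetMatA})$ vanishing at $\hat{\cSetMatA}=0$.
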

\begin{proof}
	The integral $\mathsf{j}_k$ is computed by the expansion of the exponent into series. We start with
	\begin{align}
		\notag
		\mathsf{j}_k = \int{}& \Bigg(1 + \sum\limits_{1 \le k \le 4m/3} n^{-3k/4}(\mathtt{p}_a^{(1)}(\aSetMatA, \aMatB, \aMatC))^k \Bigg) e^{ b_{k,2} + n^{-1/2}b_{k,4} + n^{-1}\mathtt{p}_c^{(1)}(\hat{\cSetMatA}, \aMatB, \aMatC)} \\
		\label{psi,tau integration}
		&\times d\aVecBt_k^+ d\aVecBt_k d\aConjMat{\aVecCt_k} d\aVecCt_k,
	\end{align}
	where the terms of degree higher than $4m$ w.r.t.\ $\aVecBt_k$ and $\aVecCt_k$ vanish, because the square of any anti-commuting variable is zero. The monomials of odd degree w.r.t.\ $\aVecBt_k$ and $\aVecCt_k$ also vanish after integration. Indeed, for every odd degree homogeneous polynomial $\tilde{\mathtt{p}}$ the expansion of $\tilde{\mathtt{p}}\left(\aVecBt_k, \aVecCt_k\right)e^{b_{k,2} + n^{-1/2}b_{k,4} + n^{-1}\mathtt{p}_c(\hat{\cSetMatA}, \aMatB, \aMatC)}$ into series gives us only odd degree terms. Whereas the number of Grassmann variables is even, there are no top degree monomials and the integral is zero. Thus \eqref{psi,tau integration} simplifies to
	\begin{equation}\label{psi,tau integration 2}
	\begin{split}
	\mathsf{j}_k = \int{}& \left(1 + n^{-3/2}\mathtt{p}_a^{(3)}(\aSetMatA, \aMatB, \aMatC)\right)e^{ b_{k,2} + n^{-1/2}b_{k,4} + n^{-1}\mathtt{p}_c^{(1)}(\hat{\cSetMatA}, \aMatB, \aMatC)} \\
	&\times d\aVecBt_k^+ d\aVecBt_k d\aConjMat{\aVecCt_k} d\aVecCt_k,
	\end{split}
	\end{equation}
	where $\mathtt{p}_a^{(3)}(\aSetMatA, \aMatB, \aMatC)$ is a polynomial and its every monomial has degree at least 2 w.r.t.\ $\aSetMatA$ and at least 2 w.r.t.\ $\aVecBt_k$ and $\aVecCt_k$. Note that
	\begin{equation}\label{negligible int}
	\int \mathtt{p}_a^{(3)}(\aSetMatA, \aMatB, \aMatC)e^{ b_{k,2} + n^{-1/2}b_{k,4} + n^{-1}\mathtt{p}_c^{(1)}(\hat{\cSetMatA}, \aMatB, \aMatC)} d\aVecBt_k^+ d\aVecBt_k d\aConjMat{\aVecCt_k} d\aVecCt_k = \mathtt{p}_a^{(2)}(\aSetMatA, \cSetMatA),
	\end{equation}
	where $\mathtt{p}_a^{(2)}(\aSetMatA, \cSetMatA)$ satisfies condition \ref{P_a}. Substitution of \eqref{negligible int} into \eqref{psi,tau integration 2} yields
	\begin{equation*} 
		\mathsf{j}_k = \int e^{ b_{k,2} + n^{-1/2}b_{k,4} + n^{-1}\mathtt{p}_c^{(1)}(\hat{\cSetMatA}, \aMatB, \aMatC)} d\aVecBt_k^+ d\aVecBt_k d\aConjMat{\aVecCt_k} d\aVecCt_k + n^{-3/2}\mathtt{p}_a^{(2)}(\aSetMatA, \cSetMatA).
	\end{equation*}
	Further expansion implies
	\begin{equation*} 
		\begin{split}
			\mathsf{j}_k ={} &\int \left(1 + n^{-1/2}b_{k,4} + n^{-1}\mathtt{p}_c^{(2)}(\hat{\cSetMatA}, \aMatB, \aMatC)\right)e^{ b_{k,2} } d\aVecBt_k^+ d\aVecBt_k d\aConjMat{\aVecCt_k} d\aVecCt_k \\
			&+ n^{-3/2}\mathtt{p}_a^{(2)}(\aSetMatA, \cSetMatA),
		\end{split}
	\end{equation*}
	where $\mathtt{p}_c^{(2)}(\hat{\cSetMatA}, \aMatB, \aMatC)$ is again a polynomial such that $\mathtt{p}_c^{(2)}(0, \aMatB, \aMatC) = 0$. Similarly to above we obtain
	\begin{equation*} 
		\begin{split}
			\mathsf{j}_k ={} &\int \left(1 + n^{-1/2}b_{k,4}\right)e^{ b_{k,2} } d\aVecBt_k^+ d\aVecBt_k d\aConjMat{\aVecCt_k} d\aVecCt_k \\
			&+ n^{-1}\mathtt{p}_c(\hat{\cSetMatA}) + n^{-3/2}\mathtt{p}_a^{(2)}(\aSetMatA, \cSetMatA),
		\end{split}
	\end{equation*}
	where $\mathtt{p}_c(\hat{\cSetMatA})$ satisfies condition \ref{P_c}.
	
	Let us consider the expression \eqref{S_k,4 def} for $b_{k,4}$ in more detail. Every term in \eqref{S_k,4 def} with $(p,s) \ne (2,2)$ has different numbers of ``non-conjugate'' Grassmann variables (without ``$+$'' superscript) and ``conjugates'' (with ``$+$'' superscript). But every term of $b_{k,2}$ has equal number of ``non-conjugate'' and ``conjugate'' Grassmann variables. The same is true for the expansion of $e^{b_{k,2}}$ and for top degree monomial of $\aVecBt_k$ and $\aVecCt_k$. Hence for $(p,s) \ne (2,2)$, $p + s = 4$
	\begin{equation*}
		\int \left(\tr \tilde{\matA}_{k,p,s} Q_{p,s} + \tr Q_{p,s}^* \matA_{k,p,s}\right)e^{ b_{k,2} } d\aVecBt_k^+ d\aVecBt_k d\aConjMat{\aVecCt_k} d\aVecCt_k = 0.
	\end{equation*}
	Therefore
	\begin{equation} \label{psi,tau integration 4}
	\begin{split}
	\mathsf{j}_k ={} &\int \left(1 
	- n^{-1/2}\left(\tr \tilde{\matA}_{k,2,2} Q_{2} + \tr Q_{2}^* \matA_{k,2,2}\right)\right)e^{ b_{k,2} } d\aVecBt_k^+ d\aVecBt_k d\aConjMat{\aVecCt_k} d\aVecCt_k \\
	&+ n^{-1}\mathtt{p}_c(\hat{\cSetMatA}) + n^{-3/2}\mathtt{p}_a^{(2)}(\aSetMatA, \cSetMatA).
	\end{split}
	\end{equation}
	
	Recalling the definition of $\sclA_{\alpha\beta}^{(k,p,s)}$ \eqref{chi def} and the values of $\cumul{p}{s}$ \eqref{cumul values}, one can render $b_{k,2}$ in the form
	\begin{equation}\label{new form of b_2}
	b_{k,2} = - \aConjMat{\aVecD_k}A\aVecD_k,
	\end{equation}
	where $A$ is defined in \eqref{A def} and $\aVecD_k$ is defined in \eqref{Psi def}. Then \eqref{psi,tau integration 4} and \eqref{Grass int} imply the assertion of the lemma.
\end{proof}
A substitution of \eqref{j_k value} into \eqref{after HS} gives us
\begin{align*}
	\CF_m = \left(\frac{n}{\pi}\right)^{c_m} \int{}& (h(\cSetMatA) + n^{-3/2}\mathtt{p}_{a}^{(2)}(\aSetMatA, \cSetMatA))^n \prod\limits_{\substack{p + s \text{ is odd} \\ 0 \le p,s \le m}} e^{-\tr\aConjMat{\aMatA_{p,s}}\aMatA_{p,s}} d\aConjMat{\aMatA_{p,s}} d\aMatA_{p,s} \\
	&\times \prod\limits_{\substack{p + s \text{ is even} \\ 0 \le p,s \le m}} e^{-n\tr Q_{p,s}^*Q_{p,s}}dQ_{p,s}^*dQ_{p,s},
\end{align*}
where $h(\cSetMatA)$ is defined in \eqref{h def}. Further
\begin{equation*}
	(h(\cSetMatA) + n^{-3/2}\mathtt{p}_{a}^{(2)}(\aSetMatA, \cSetMatA))^n = \sum\limits_{k = 0}^{c_m} \binom{n}{k}n^{-3k/2}h(\cSetMatA)^{n - k}(\mathtt{p}_{a}^{(2)}(\aSetMatA, \cSetMatA))^k
\end{equation*}
because there are $2c_m$ anti-commuting variables and every monomial of $\mathtt{p}_a^{(2)}$ has at least second degree w.r.t.\ $\aSetMatA$. Hence,
\begin{equation}\label{before int over rho}
\begin{split}
	\CF_m = \left(\frac{n}{\pi}\right)^{c_m} \int{}& (h(\cSetMatA)^{c_m} + n^{-1/2}\mathtt{p}_a^{(3)}(\aSetMatA, \cSetMatA)) \prod\limits_{\substack{p + s \text{ is odd} \\ 0 \le p,s \le m}} e^{-\tr\aConjMat{\aMatA_{p,s}}\aMatA_{p,s}} d\aConjMat{\aMatA_{p,s}} d\aMatA_{p,s} \\
	&\times e^{nf(\cSetMatA)-c_m\log h(\cSetMatA)}d\cSetMatA,
\end{split}
\end{equation}
where $\mathtt{p}_a^{(3)}$ is a polynomial and $f(\cSetMatA)$ is defined in \eqref{f def}. Taking into account \eqref{Grass int} and the definition of an integral over anti-commuting variables, one can perform the integration over $\aSetMatA$ in \eqref{before int over rho} and obtain \eqref{IR result}.

\section{Asymptotic analysis}\label{sec:asympt analysis}
The goal of the section is to investigate the asymptotic behavior of the integral representation~\eqref{IR SVD}. To this end, the steepest descent method is applied. As usual, the hardest step is to choose stationary points of $f(\cSetMatA)$ and a $N$-dimensional (real) manifold $M_* \subset \Compl^{N}$ such that for any chosen stationary point $\cSetMatA_* \in M_*$
\begin{equation*}
	\Re f(\cSetMatA) < \Re f(\cSetMatA_*), \quad \forall \cSetMatA \in M_*,\, \text{$\cSetMatA$ is not chosen}.
\end{equation*}
Note that $N$ is equal to the number of real variables of the integration, i.e.\ in our case $N = 2^{2m}$.

The present proof proceeds a slightly different but rather standard scheme for the case when function $f(\cSetMatA)$ has the form
\begin{equation*}
	f(\cSetMatA) = f_0(\cSetMatA) + n^{-1/2}f_r(\cSetMatA),
\end{equation*}
where $f_0(\cSetMatA)$ does not depend on $n$, whereas $f_r(\cSetMatA)$ may depend on $n$. We choose stationary points of $f_0(\cSetMatA)$ of the form $Q_1 = U\lambda_0V^* 
$, 
$\hat{\cSetMatA} = 0$, where $\lambda_0$ is a fixed real number and $U$ and $V$ vary in $U(m)$. Then the steepest descent method is applied to the integral over $\Lambda$ and $\hat{\cSetMatA}$. In the process $U$ and $V$ are being considered as parameters and all the estimates are uniform in $U$ and $V$. As  soon as the domain of integration is restricted by a small neighborhood we recall about the integration over $U$ and $V$. After several changes of the variables the integral is finally computed.

We start with the analysis of $f_0$.
\begin{lem}\label{lem:max of f_0}
	Let the function $f_0 \colon \R^{2^{2m}} \to [-\infty, +\infty)$ be defined by \eqref{f_0 def}. Then $f_0(\Lambda, \hat{\cSetMatA})$ attains its global maximum value only at the point
	\begin{equation*}
		\lambda_1 = \dotsb = \lambda_m = \lambda_0, \quad \hat{\cSetMatA} = 0,
	\end{equation*}
	where $\lambda_0 = \sqrt{1 - \abs{z_0}^2}$. Moreover, the matrix of second order derivatives of $f_0$ w.r.t. $\Lambda$ and $\hat{\cSetMatA}$ at this point is negative definite.
\end{lem}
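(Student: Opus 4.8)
The plan is to make the separable structure of $f_0$ explicit and then reduce the optimization to a one–dimensional calculus problem. First I would invoke \eqref{h_0 def}, which already records the factorization $h_0(\cMatA_1) = \prod_{j=1}^m(\abs{z_0}^2 + \lambda_j^2)$, so that $\log h_0$ depends on $\cMatA_1$ only through its singular values $\lambda_1, \dotsc, \lambda_m$. Since $\cMatA_{1,1} = \cMatA_1$ is the only even block entering $\log h_0$ in \eqref{f_0 def}, while the remaining even blocks collected in $\hat{\cSetMatA}$ appear solely through $-\sum \tr \cMatA_{p,s}^*\cMatA_{p,s}$, and since $\tr \cMatA_1^*\cMatA_1 = \sum_j \lambda_j^2$, the function splits as
\[
f_0(\Lambda, \hat{\cSetMatA}) = \sum_{j=1}^m \mathsf{g}(\lambda_j) - \sum_{\substack{p+s\text{ even},\ (p,s)\neq(1,1)\\ 0\le p,s\le m}} \tr \cMatA_{p,s}^*\cMatA_{p,s},
\]
with $\mathsf{g}(\lambda) := \log(\abs{z_0}^2 + \lambda^2) - \lambda^2$. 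This reduces the problem to maximizing each $\mathsf{g}(\lambda_j)$ over $\lambda_j \ge 0$ independently, together with the negative definite quadratic form in the entries of $\hat{\cSetMatA}$.

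The quadratic form attains its maximal value $0$ exactly when $\hat{\cSetMatA} = 0$. For the single–variable part I would compute $\mathsf{g}'(\lambda) = 2\lambda\left(\frac{1}{\abs{z_0}^2+\lambda^2} - 1\right)$, whose sign on $(0,\infty)$ is that of $1 - \abs{z_0}^2 - \lambda^2$. Because $\abs{z_0} < 1$ (bulk condition), the quantity $\lambda_0 = \sqrt{1 - \abs{z_0}^2}$ is real and strictly positive, and $\mathsf{g}$ is strictly increasing on $(0, \lambda_0)$ and strictly decreasing on $(\lambda_0, +\infty)$. Hence $\mathsf{g}$ attains its global maximum on $[0,+\infty)$ uniquely at $\lambda_0$; in particular the boundary critical point $\lambda = 0$ is ruled out. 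Combining the two parts shows that $f_0$ is maximized precisely at $\lambda_1 = \dotsb = \lambda_m = \lambda_0$, $\hat{\cSetMatA} = 0$. I would stress that this is a single point in the $(\Lambda, \hat{\cSetMatA})$ variables with $U, V$ treated as parameters, which is consistent with the $U(m)\times U(m)$ invariance of $f_0$ noted after \eqref{f_r def}.

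For the second–order statement, the separated form makes the Hessian with respect to $(\Lambda, \hat{\cSetMatA})$ block diagonal, with no $\Lambda$–$\hat{\cSetMatA}$ cross terms. The $\Lambda$–block is the diagonal matrix with entries $\mathsf{g}''(\lambda_0)$; a direct computation using $\abs{z_0}^2 + \lambda_0^2 = 1$ gives $\mathsf{g}''(\lambda_0) = -4\lambda_0^2 < 0$. The $\hat{\cSetMatA}$–block comes from $-\sum \tr \cMatA_{p,s}^*\cMatA_{p,s}$, which in the real and imaginary coordinates of the matrix entries is simply $-2$ times the identity. Both blocks being negative definite, the full Hessian is negative definite. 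The only point requiring genuine care is the passage from a local to a global maximum for $\mathsf{g}$ — equivalently, the exclusion of $\lambda = 0$ — which the monotonicity analysis above settles cleanly (alternatively one may compare the two critical values via $\log t \le t - 1$ with $t = \abs{z_0}^2$); everything else is routine.
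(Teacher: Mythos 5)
Your proof is correct and takes essentially the same route as the paper's: the paper likewise separates $f_0(\Lambda, \hat{\cSetMatA}) = \sum_{j=1}^m f_*(\lambda_j) - \sum_{(p,s)\ne(1,1)} \tr \cMatA_{p,s}^*\cMatA_{p,s}$ with $f_*(\lambda) = -\lambda^2 + \log(\abs{z_0}^2 + \lambda^2)$, reduces to the one-dimensional problem (global maximum only at $\lambda_0$, using $f_* \to -\infty$ at infinity), and records $f_*''(\lambda_0) = -4\lambda_0^2$ for the negative definite Hessian. Your monotonicity analysis explicitly ruling out the endpoint $\lambda = 0$ is just a slightly more careful rendering of the same one-variable argument.
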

\begin{proof}
	It is evident from \eqref{f_0 def} and \eqref{h_0 def} that $f_0(\Lambda, \hat{\cSetMatA})$ has the form
	\begin{equation}\label{f_0: variables separation}
	f_0(\Lambda, \hat{\cSetMatA}) = \sum\limits_{j = 1}^m f_*(\lambda_j) - \sum\limits_{(p,s) \ne (1,1)} \tr Q_{p,s}^*Q_{p,s}, 
	\end{equation}
	where
	\begin{equation*} 
		f_*(\lambda) = -\lambda^2 + \log (\abs{z_0}^2 + \lambda^2).
	\end{equation*}
	Since $f_*'(\lambda) = 0$ iff $\lambda = \lambda_0$ and $\lim\limits_{\lambda \to \infty} f_*(\lambda) = -\infty$, $f_*(\lambda)$ attains its global maximum value only at $\lambda = \lambda_0$. Furthermore, $f_*''(\lambda_0) = -4\lambda_0^2$. These facts and \eqref{f_0: variables separation} immediately imply the assertion of the lemma.
\end{proof}

As in the previous section we consider first the Gaussian case and then the general case.

\subsection{Gaussian case}
Now we proceed to the integral estimates. In a standard way the integration domain in \eqref{IR result Gauss} can be restricted as follows
\begin{equation*}
	\CF_m = Cn^{m^2} \int\limits_{\Sigma_r} \Vanddet^2(\Lambda^2) \prod\limits_{j = 1}^m \lambda_j \times e^{nf(U\Lambda V^*)} d\mu(U) d\mu(V) d\Lambda + O(e^{-nr/2}),
\end{equation*}
where
\begin{equation*}
	\Sigma_r = \left\{(\Lambda, U, V) \mid \norm{\Lambda} \le r\right\}.
\end{equation*}
The next step is to restrict the integration domain by 
\begin{equation}\label{stpoinnbh def}
\stpointsnbh = \left\{(\Lambda, U, V) \mid \norm{\Lambda - \Lambda_0} \le \frac{\log n}{\sqrt{n}}\right\},
\end{equation}
where $\Lambda_0 = \lambda_0I$, $I$ is a unit matrix. To this end we need the estimate of $\Re f$ given by the following lemmas.

\begin{lem}\label{lem:f(UL V^*) expansion}
	Let $\tilde{\Lambda}$ be a $m \times m$ diagonal matrix such that $\normsized{\tilde{\Lambda}} \le \log n$. Then uniformly in $U$ and $V$
	\begin{equation} \label{f expansion}
	\begin{split}
	f(U(\Lambda_0 + n^{-1/2}\tilde{\Lambda})V^*) ={}& {- m\lambda_0^2} + n^{-1/2} \tr (\bar{z}_0\cMatPos + z_0\cMatPos^*) + n^{-1} \tr\cMatPos_U\cMatPos_V^* \\
	&- n^{-1} \tr(2\lambda_0\tilde{\Lambda} + \bar{z}_0\cMatPos_U + z_0\cMatPos_V^*)^2/2 + 
	O\big(n^{-3/2}\log^3 n\big)
	\end{split}
	\end{equation}
	where
	\begin{equation}\label{zeta_B def}
	\cMatPos_B = B^*\cMatPos B.
	\end{equation}
\end{lem}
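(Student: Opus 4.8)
The plan is to expand $f$ around the stationary manifold directly from its definition \eqref{f Gauss} in the Gaussian case, namely $f(Q_1) = -\tr Q_1^*Q_1 + \log\det A$, substituting $Q_1 = U(\Lambda_0 + n^{-1/2}\tilde\Lambda)V^*$ and tracking terms up to order $n^{-1}$. The unitary invariance $\tr Q_1^*Q_1 = \tr(\Lambda_0 + n^{-1/2}\tilde\Lambda)^2$ handles the first summand trivially, producing $-m\lambda_0^2 - 2n^{-1/2}\lambda_0\tr\tilde\Lambda - n^{-1}\tr\tilde\Lambda^2$. The genuine work is in expanding $\log\det A$, where $A = A(Q_1)$ is the $2m\times 2m$ block matrix from \eqref{A def} with off-diagonal blocks $\pm Q_1$ and diagonal blocks $-Z, -Z^*$. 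Recalling \eqref{z_j}, we write $Z = z_0 I + n^{-1/2}\cMatPos$, so both the diagonal and off-diagonal blocks carry an $n^{-1/2}$ perturbation on top of the leading block matrix built from $z_0$ and $\lambda_0$.

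First I would factor out the leading part. Write $A = A_0 + n^{-1/2}A_1$, where $A_0$ is the block matrix with $-z_0 I$, $U\Lambda_0 V^* = \lambda_0 UV^*$ entries, and $A_1$ collects the perturbations $-\cMatPos$ (upper left), $U\tilde\Lambda V^*$ (upper right), and their adjoint counterparts. Then $\log\det A = \log\det A_0 + \tr\log(I + n^{-1/2}A_0^{-1}A_1)$, and I expand the second term via the series $\tr\log(I+M) = \tr M - \tfrac12\tr M^2 + O(\|M\|^3)$ with $M = n^{-1/2}A_0^{-1}A_1$. Since $\|\tilde\Lambda\|\le\log n$ and $\cMatPos$ is fixed, $\|M\| = O(n^{-1/2}\log n)$, so the cubic remainder is $O(n^{-3/2}\log^3 n)$, matching the claimed error. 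The key computational input is an explicit formula for $A_0^{-1}$: because $A_0$ has scalar blocks ($-z_0 I$, $\lambda_0 UV^*$, $-\lambda_0 VU^*$, $-\bar z_0 I$), its inverse is again block-scalar and can be computed by the $2\times 2$ block inversion formula, with the Schur complement governed by $|z_0|^2 + \lambda_0^2 = 1$ (using $\lambda_0^2 = 1 - |z_0|^2$ from Lemma \ref{lem:max of f_0}). This normalization to $1$ is what makes the coefficients clean.

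With $A_0^{-1}$ in hand, the linear term $n^{-1/2}\tr A_0^{-1}A_1$ produces the $n^{-1/2}\tr(\bar z_0\cMatPos + z_0\cMatPos^*)$ piece together with a $\tr\tilde\Lambda$ contribution that should cancel against the corresponding term from $-\tr Q_1^*Q_1$ at this order (consistency with the vanishing of the first derivative at the stationary point). The quadratic term $-\tfrac12 n^{-1}\tr(A_0^{-1}A_1)^2$ is where the structure $\cMatPos_U = U^*\cMatPos U$, $\cMatPos_V = V^*\cMatPos V$ enters: conjugating the fixed matrix $\cMatPos$ by the unitaries $U$ and $V$ coming from $A_0^{-1}$ yields exactly the combination $2\lambda_0\tilde\Lambda + \bar z_0\cMatPos_U + z_0\cMatPos_V^*$ appearing squared in \eqref{f expansion}, plus the cross term $\tr\cMatPos_U\cMatPos_V^*$ at order $n^{-1}$. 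I would organize the bookkeeping so that the diagonal ($\tilde\Lambda$) and conjugated-off-diagonal contributions assemble into the single perfect square shown, which is the form best suited for the Gaussian integration in $\tilde\Lambda$ that follows.

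The main obstacle will be the careful block-matrix algebra in the quadratic term: correctly conjugating $\cMatPos$ by the $U$ and $V$ factors buried in $A_0^{-1}$ and in $A_1$'s off-diagonal blocks, and verifying that all pieces collapse into the advertised perfect square rather than a generic quadratic form. The sign conventions from the $\pm$ off-diagonal blocks and the adjoints, together with keeping the $U,V$-dependence uniform (so the error $O(n^{-3/2}\log^3 n)$ holds uniformly over the compact group $U(m)\times U(m)$), require attention. Everything else is routine Taylor expansion once $A_0^{-1}$ is written down explicitly, so I would invest the effort in a clean closed form for that inverse.
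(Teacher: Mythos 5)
Your proposal is correct and follows essentially the same route as the paper: the paper likewise writes $\tr Q_1^*Q_1 = \tr(\Lambda_0 + n^{-1/2}\tilde\Lambda)^2$, splits $A$ into a leading part plus an $n^{-1/2}$ perturbation, and expands $\log\det A = \tr\log(1 + n^{-1/2}A_0^{-1}A_1)$ to second order with the same $O(n^{-3/2}\log^3 n)$ remainder, using $\det A_0 = (\abs{z_0}^2 + \lambda_0^2)^m = 1$. The only cosmetic difference is that the paper first pulls the unitaries out as a block-diagonal conjugation $\diag(U,V)(A_0 + n^{-1/2}A_1)\diag(U^*,V^*)$, so that $A_0$ is $U,V$-independent and $A_1$ already contains $\cMatPos_U$, $\cMatPos_V^*$, whereas you keep $U,V$ inside the leading block matrix and invert it directly — equivalent by invariance of the trace under conjugation.
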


\begin{proof}
	If $Q_1 = U(\Lambda_0 + n^{-1/2}\tilde{\Lambda})V^*$ then $A$ has the form
	\begin{equation*}
		A = \begin{pmatrix}
			U 	& 0 \\
			0 	& V
		\end{pmatrix}\left(A_0 + \frac{1}{\sqrt{n}}A_1\right)\begin{pmatrix}
		U^*	& 0 \\
		0	& V^*
	\end{pmatrix},
\end{equation*}
where
\begin{equation}\label{A_0,A_1 def}
A_0 = \begin{pmatrix}
-Z_0 		& \Lambda_0 \\
-\Lambda_0 	& -Z_0^*
\end{pmatrix}, \quad
A_1 = \begin{pmatrix}
-\cMatPos_U 		& \tilde{\Lambda} \\
-\tilde{\Lambda} 	& -\cMatPos_V^*
\end{pmatrix}.
\end{equation}
Taking into account that
\begin{equation*}
	\det A_0 = \left[\det 
	\begin{pmatrix}
		-z_0		& \lambda_0 \\
		-\lambda_0	& -\cConjScl{z}_0
	\end{pmatrix}\right]^m
	= 1,
\end{equation*}
one gets
\begin{equation}\label{log det A}
\begin{split}
\log \det A &= 
\log \det A_0^{-1}A = \tr \log (1 + n^{-1/2}A_0^{-1}A_1) \\
&= \frac{1}{\sqrt{n}}\tr A_0^{-1}A_1 - \frac{1}{2n}\tr (A_0^{-1}A_1)^2 + O\left(\frac{\log^3 n}{\sqrt{n^3}}\right)
\end{split}
\end{equation}
uniformly in $U$ and $V$. Moreover,
\begin{equation}\label{A_0^-1A_1}
A_0^{-1}A_1 = \begin{pmatrix}
\bar{z}_0\cMatPos_U + \lambda_0\tilde{\Lambda}	& -\bar{z}_0\tilde{\Lambda} + \lambda_0\cMatPos_V^* \\
-\lambda_0\cMatPos_U + z_0\tilde{\Lambda}		& \lambda_0\tilde{\Lambda} + z_0\cMatPos_V^*
\end{pmatrix}.
\end{equation}
Combining \eqref{log det A}, \eqref{A_0^-1A_1} and \eqref{f Gauss}, we get
\begin{equation*}
	\begin{split}
		f(U(\Lambda_0 + n^{-1/2}\tilde{\Lambda})V^*) = \tr\Big[&- \Lambda_0^2 - 2n^{-1/2}\lambda_0\tilde{\Lambda} - n^{-1}\tilde{\Lambda}^2 + n^{-1/2}(2\lambda_0 \tilde{\Lambda} + \bar{z}_0 \cMatPos_U + z_0 \cMatPos_V^*) \\
		&- n^{-1}\big\{(\lambda_0^2 - \abs{z_0}^2) \tilde{\Lambda}^2 + 2\bar{z}_0\lambda_0\cMatPos_U\tilde{\Lambda} + 2z_0\lambda_0\cMatPos_V^*\tilde{\Lambda} \\
		&+ \frac{1}{2}(\bar{z}_0\cMatPos_U + z_0\cMatPos_V^*)^2 - \cMatPos_U\cMatPos_V^*\big\}\Big] + O\big(n^{-3/2}\log^3 n\big).
	\end{split}
\end{equation*}
The last expansion yields \eqref{f expansion}.
\end{proof}

\begin{lem}\label{lem:est for Re f}
	Let $\tilde{f}(Q_1) = f(Q_1) - f(\Lambda_0)$. Then for sufficiently large $n$
	\begin{equation*} 
		\max_{\frac{\log n}{\sqrt{n}} \le \norm{\Lambda - \Lambda_0} \le r} \Re \tilde{f}(U\Lambda V^*) \le -C\frac{\log^2 n}{n}
	\end{equation*}
	uniformly in $U$ and $V$.
\end{lem}
\begin{proof}
	First let us check that the first and the second derivatives of $f_r$ are bounded in the $\delta$-neighborhood of $\Lambda_0$, where $f_r$ is defined in \eqref{f_r def} and $\delta$ is $n$-independent. Indeed, since $h$ and $h_0$ are polynomials and $h \rightrightarrows h_0$ on compacts
	\begin{equation*}
		\abs{\frac{1}{\sqrt{n}}\frac{\partial \Re f_r}{\partial \lambda_j}} \le \abs{\frac{1}{\sqrt{n}}\frac{\partial f_r}{\partial \lambda_j}} = \abs{\frac{\partial (f - f_0)}{\partial \lambda_j}} = \abs{\frac{\partial (\log h - \log h_0)}{\partial \lambda_j}} \le \abs{\frac{1}{h_0} \cdot \frac{\partial h_0}{\partial \lambda_j} - \frac{1}{h} \cdot \frac{\partial h}{\partial \lambda_j}} \le \frac{C}{\sqrt{n}}.
	\end{equation*}
	For every diagonal matrix $E = \diag\{e_j\}$ let $v(E)$ denote a vector with components $e_j$. Then for every diagonal matrix $E$ of unit norm and for $\frac{\log n}{\sqrt{n}} \le t \le \delta$ we have
	\begin{equation*}
		\begin{split}
			\der{}{t} \Re\tilde{f}(U(\Lambda_0 + tE)V^*) ={} &\langle \nabla_\Lambda f_0(U(\Lambda_0 + tE)V^*), v(E) \rangle \\
			&+ n^{-1/2} \langle \nabla_\Lambda \Re f_r(U(\Lambda_0 + tE)V^*), v(E) \rangle \\
			={} &\langle \nabla_\Lambda f_0(\Lambda_0 + tE), v(E) \rangle + O(n^{-1/2}),
		\end{split}
	\end{equation*}
	where $\langle \cdot, \cdot \rangle$ is a standard scalar product. Expanding the scalar product by the Taylor formula and considering that $\nabla_\Lambda f_0(\Lambda_0) = 0$, we obtain
	\begin{equation*}
		\begin{split}
			\der{}{t} \Re\tilde{f}(U(\Lambda_0 + tE)V^*) &= t\langle f_0''(\Lambda_0)v(E), v(E) \rangle + r_1 + O(n^{-1/2}),
		\end{split}
	\end{equation*}
	where $f_0''$ is a matrix of second order derivatives of $f_0$ w.r.t.\ $\Lambda$ and $\abs{r_1} \le Ct^2$. $f_0''(\Lambda_0)$ is negative definite according to Lemma \ref{lem:max of f_0}. Hence $\der{}{t} \Re\tilde{f}(U(\Lambda_0 + tE)V^*)$ is negative and
	\begin{equation}\label{Re f: nbh est}
	\max_{\frac{\log n}{\sqrt{n}} \le \norm{\Lambda - \Lambda_0} \le \delta} \Re \tilde{f}(U\Lambda V^*) = \max_{\norm{\Lambda - \Lambda_0} = \frac{\log n}{\sqrt{n}}} \Re \tilde{f}(U\Lambda V^*) \le f(U\Lambda_0 V^*) - C\frac{\log^2 n}{n} - f(\Lambda_0).
	\end{equation}
	Notice that $f_r$ is bounded from above uniformly in $n$. This fact and Lemma \ref{lem:max of f_0} imply that $\delta$ in \eqref{Re f: nbh est} can be replaced by $r$
	\begin{equation*}
		\max_{\frac{\log n}{\sqrt{n}} \le \norm{\Lambda - \Lambda_0} \le r} \Re \tilde{f}(U\Lambda V^*) \le f(U\Lambda_0 V^*) - f(\Lambda_0) - C\frac{\log^2 n}{n}.
	\end{equation*}
	It remains to deduce from Lemma \ref{lem:f(UL V^*) expansion} that $f(U\Lambda_0 V^*) - f(\Lambda_0) = O(n^{-1})$ uniformly in $U$ and $V$.
	\looseness=-1
	
\end{proof}

Lemma \ref{lem:est for Re f} yields
\begin{equation*}
	\begin{split}
		\CF_m ={}& Cn^{m^2}e^{nf(\Lambda_0)} \Bigg(\int\limits_{\stpointsnbh} \Vanddet^2(\Lambda^2) \prod\limits_{j = 1}^m \lambda_j \times e^{n\tilde{f}(U\Lambda V^*)} d\mu(U) d\mu(V) d\Lambda + O(e^{-C_1\log^2 n})\Bigg),
	\end{split}
\end{equation*}
where $\stpointsnbh$ is defined in \eqref{stpoinnbh def}.
Changing the variables $\Lambda = \Lambda_0 + \frac{1}{\sqrt{n}}\tilde{\Lambda}$ and expanding $f$ according to Lemma \ref{lem:f(UL V^*) expansion} we obtain
\begin{equation}\label{n-indep int}
\begin{split}
\CF_m = C\mathsf{k}_n \int\limits_{\sqrt{n}\stpointsnbh} &\Vanddet^2(\tilde{\Lambda}) \exp\left\{-\tr(2\lambda_0\tilde{\Lambda} + \bar{z}_0\cMatPos_U + z_0\cMatPos_V^*)^2/2 + \tr\cMatPos_U\cMatPos_V^*\right\} \\
&\times d\mu(U) d\mu(V) d\tilde{\Lambda}(1 + o(1)),
\end{split}
\end{equation}
where
\begin{equation}\label{K_n def}
\mathsf{k}_n = n^{m^2/2}e^{-mn\lambda_0^2 + \sqrt{n}\tr \left(\bar{z}_0\cMatPos + z_0\cMatPos^*\right)}.
\end{equation}
Let us change the variables $V = WU$. Taking into account that the Haar measure is invariant w.r.t.\ shifts we get
\begin{equation*}
	\begin{split}
		\CF_m = C\mathsf{k}_n \int\limits_{\R^m} \int\limits_{U(m)} \int\limits_{U(m)} &\Vanddet^2(\tilde{\Lambda}) \exp\left\{-\tr(2\lambda_0\tilde{\Lambda} + U^*(\bar{z}_0\cMatPos + z_0\cMatPos_W^*)U)^2/2 + \tr\cMatPos W^*\cMatPos^* W\right\} \\
		&\times d\mu(U) d\mu(W) d\tilde{\Lambda}(1 + o(1)) \\
		= C\mathsf{k}_n \int\limits_{\R^m} \int\limits_{U(m)} \int\limits_{U(m)} &\Vanddet^2(\tilde{\Lambda}) \exp\left\{-\tr(2\lambda_0U\tilde{\Lambda}U^* + (\bar{z}_0\cMatPos + z_0\cMatPos_W^*))^2/2 + \tr\cMatPos W^*\cMatPos^* W\right\} \\
		&\times d\mu(U) d\mu(W) d\tilde{\Lambda}(1 + o(1)).
	\end{split}
\end{equation*}
The next step is to change the variables $H = U\tilde{\Lambda}U^*$. The Jacobian is $\frac{\prod_{j = 1}^{m - 1} j!}{(2\pi)^{m(m - 1)/2}}\Vanddet^{-2}(\tilde{\Lambda})$ (see e.g.\ \cite{Hu:63}). Thus
\begin{equation*}
	\begin{split}
		\CF_m = C\mathsf{k}_n \int\limits_{\herm_m} \int\limits_{U(m)} & \exp\left\{-\tr(2\lambda_0H + (\bar{z}_0\cMatPos + z_0\cMatPos_W^*))^2/2 + \tr\cMatPos W^*\cMatPos^* W\right\} \\
		&\times d\mu(W) dH(1 + o(1)),
	\end{split}
\end{equation*}
where $\herm_m$ is a space of hermitian $m \times m$ matrices and
\begin{equation*}
	dH = \prod\limits_{j = 1}^m d(H)_{jj} \prod\limits_{j < k} d\Re (H)_{jk} d\Im (H)_{jk}.
\end{equation*}
The Gaussian integration over $H$ implies
\begin{equation}\label{last asympt}
\CF_m = C\mathsf{k}_n \int\limits_{U(m)} \exp\left\{ \tr\cMatPos W^*\cMatPos^* W\right\} d\mu(W) (1 + o(1)).
\end{equation}
If $\cMatPos = 0$, \eqref{last asympt} immediately yields \eqref{main:moments}. Otherwise, for computing the integral over the unitary group, the following Harish-Chan\-dra/It\-syk\-son--Zuber formula is used
\begin{prop}\label{pr:H-C/I--Z formula}
	Let $A$ and $B$ be normal $d \times d$ matrices with distinct eigenvalues $\{a_j\}_{j = 1}^d$ and $\{b_j\}_{j = 1}^d$ respectively. 
	Then
	\begin{equation*} 
		\int\limits_{U(d)} \exp\{z\tr AU^*BU\}d\mu(U) = \bigg(\prod\limits_{j = 1}^{d - 1} j!\bigg) \frac{\det\{\exp(za_jb_k)\}_{j,k = 1}^d}{z^{(d^2 - d)/2}\Vanddet(A)\Vanddet(B)},
	\end{equation*}
	where $z$ is some constant, $\mu$ is a Haar measure, and $\Vanddet(A) = \prod\limits_{j > k}(a_j - a_k)$.
\end{prop}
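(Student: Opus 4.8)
The plan is to follow the classical heat-kernel argument of Harish--Chandra and Itzykson--Zuber and to fix the overall constant by a coincidence limit. Write $I(A,B) = \int_{U(d)} \exp\{z\tr A U^* B U\}\, d\mu(U)$. Since $A$ and $B$ are normal they admit spectral decompositions $A = U_A \diag\{a_j\} U_A^*$, $B = U_B \diag\{b_j\} U_B^*$ with $U_A, U_B \in U(d)$; substituting and using the invariance of the Haar measure under $U \mapsto U_B^* U U_A$, one sees that $I$ depends on $A$ and $B$ only through their eigenvalues. Hence it suffices to treat $A = \diag\{a_j\}$ and $B = \diag\{b_j\}$. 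I would first establish the identity for real $a_j, b_j$ (the Hermitian case) and recover the general normal case at the end by analytic continuation, both sides being entire in $(a_j, b_j) \in \Compl^{2d}$.

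For real eigenvalues, regard $I$ as a function of the spectrum of the second argument: set $\Psi(H) = \int_{U(d)} \exp\{z\tr (U^* A U) H\}\, d\mu(U)$ for Hermitian $H$. This is $U(d)$-invariant, hence a symmetric function $\psi(b_1, \dots, b_d)$ of the eigenvalues of $H$. Applying the flat Laplacian $\Delta_H = \sum \partial^2$ on Hermitian matrices and using $\tr((U^* A U)^2) = \tr A^2$ gives $\Delta_H \Psi = z^2(\tr A^2)\Psi$. By Harish--Chandra's radial part formula, the restriction of $\Delta_H$ to symmetric functions of eigenvalues equals $\Vanddet(B)^{-1}\big(\sum_j \partial_{b_j}^2\big)\Vanddet(B)$ (the Vandermonde being harmonic), so $\chi := \Vanddet(B)\,\psi$ is an antisymmetric solution of the free equation
\[
\sum_j \partial_{b_j}^2 \chi = z^2\Big(\sum_j a_j^2\Big)\chi.
\]

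The function $\det\{\exp(z a_j b_k)\}_{j,k=1}^d$ is also antisymmetric in $b_1, \dots, b_d$ and satisfies the same equation, since each summand over $S_d$ is an eigenfunction with the common eigenvalue $z^2\sum_k a_k^2$. A uniqueness statement for antisymmetric entire solutions of prescribed growth then forces $\chi(b) = \phi(a)\det\{\exp(z a_j b_k)\}$ for a factor $\phi(a)$ independent of $b$; combined with the symmetry $I(A,B) = I(B,A)$ (from $U \mapsto U^*$) this yields $\phi(a)\Vanddet(A) \equiv \mathrm{const}$, so that
\[
I(A,B) = \mathrm{const} \cdot \frac{\det\{\exp(z a_j b_k)\}}{\Vanddet(A)\Vanddet(B)}.
\]
The constant is pinned down by letting all $a_j, b_j \to 0$: the left side tends to $\int_{U(d)} d\mu = 1$, while expanding $\exp(z a_j b_k) = \sum_n (z a_j b_k)^n/n!$ and applying Cauchy--Binet shows that the leading term of the numerator is $z^{\binom{d}{2}}\Vanddet(A)\Vanddet(B)/\prod_{p=1}^{d-1} p!$ (the power assignment $0,1,\dots,d-1$ reproducing two Vandermondes). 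Comparing the limits gives the numeric constant $\prod_{p=1}^{d-1} p!$ and, simultaneously, the factor $z^{(d^2-d)/2}$ in the denominator, which is exactly the asserted formula.

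The delicate point is the uniqueness step --- ruling out other antisymmetric eigenfunctions of the free Laplacian with the same eigenvalue. I would control it by noting that $\chi$ is entire, antisymmetric, and of at most exponential type governed by $\max_j \abs{a_j}$ in each $b_j$, which confines it to the span of the exponential determinants; the symmetry in $a$ then removes the remaining freedom. As an alternative that sidesteps the PDE entirely, the whole identity also follows from the exactness of stationary phase (Duistermaat--Heckman) applied to the maximal-torus action on the coadjoint orbit of $B$, whose fixed points are the permuted diagonal matrices $\diag\{b_{\sigma(1)}, \dots, b_{\sigma(d)}\}$; summation of the fixed-point contributions over $\sigma \in S_d$ reconstitutes the determinant and the factorial constant.
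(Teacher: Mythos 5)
The paper does not actually prove this proposition: it is quoted as a known result with a pointer to Appendix~5 of Mehta's book, and your heat-kernel/radial-Laplacian line is precisely the classical Itzykson--Zuber argument reproduced there. Most of your steps are sound as sketched: the reduction to diagonal $A,B$ via normality and Haar invariance, the identity $\Delta_H\Psi = z^2(\tr A^2)\Psi$ for the flat Laplacian with respect to the Hilbert--Schmidt metric, the radial-part formula $\Vanddet(B)^{-1}\big(\sum_j\partial_{b_j}^2\big)\Vanddet(B)$ (using harmonicity of the Vandermonde), the symmetry argument pinning $\phi(a)\Vanddet(A)$ to a constant, and the normalization at $a,b\to 0$ via Cauchy--Binet, which correctly yields both $\prod_{p=1}^{d-1}p!$ and the $z^{(d^2-d)/2}$ in the denominator.

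The genuine gap is exactly where you flagged it, and your proposed repair fails as stated. Antisymmetry plus exponential type bounded by $\max_j\abs{a_j}$ does \emph{not} confine solutions of
\begin{equation*}
\sum_{j}\partial_{b_j}^2\chi \;=\; z^2\Big(\sum_j a_j^2\Big)\chi
\end{equation*}
to the span of $\det\{e^{z a_j b_k}\}$: for \emph{every} $\alpha\in\Compl^d$ with $\sum_j\alpha_j^2=\sum_j a_j^2$ and $\abs{\alpha_j}\le\max_j\abs{a_j}$ --- a continuum of frequency vectors --- the function $\det\{e^{z\alpha_j b_k}\}_{j,k=1}^d$ is an entire, antisymmetric solution of the same equation with the same type, as are superpositions over this shell; and symmetry in $a$ does not exclude them, since such competitors may enter with $a$-dependent coefficients. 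The standard repairs are: (i) use the full commuting family of invariant operators (radial parts of $\tr(\partial/\partial H)^p$ for $p=1,\dotsc,d$), whose joint eigenvalues $\sum_j a_j^p$ determine the frequency set $\{a_j\}$ up to permutation and cut the solution space down to the $d!$-dimensional span of $e^{z\sum_j a_{\sigma(j)}b_j}$; or (ii) run the time-dependent heat equation with delta-type initial data concentrated on the orbit, which is how Itzykson--Zuber (and the appendix the paper cites) actually isolate the solution. Your Duistermaat--Heckman alternative is a legitimate complete route --- the fixed points of the torus action on the coadjoint orbit are the $d!$ permuted diagonalizations of $B$, and the Hessian/weight data reproduce the Vandermondes, the sign of $\sigma$, and the factorial constant --- but as written it is an invocation rather than a proof; if you rely on it you must verify nondegeneracy of the fixed points and the symplectic volume normalization, which is where $\prod_{p=1}^{d-1}p!$ re-enters.
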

For the proof see, e.g., \cite[Appendix 5]{Me:91}.
Applying the Harish-Chan\-dra/It\-syk\-son--Zuber formula to \eqref{last asympt} we obtain
\begin{equation*} 
	\CF_m = C\mathsf{k}_n \frac{\det \{e^{\zeta_j\cConjScl{\zeta}_k}\}_{j,k = 1}^m}{\abs{\Vanddet(\cMatPos)}^2} (1 + o(1)),
\end{equation*}
which in combination with \eqref{F_1 behavior} yields the result of Theorem \ref{th1}.
\subsection{General case}
In the general case the proof proceeds by the same scheme as in the Gaussian case. In this subsection we focus on the crucial distinctions from the Gaussian case and refine the corresponding assertions from previous subsection. Set
\begin{equation*}
	\normsized{\hat{\cSetMatA}} = \sum\limits_{\substack{p + s \text{ is even} \\ 0 \le p,s \le m \\ (p,s) \ne (1,1)}} \norm{Q_{p,s}}.
\end{equation*}
The generalization of Lemma \ref{lem:f(UL V^*) expansion} is
\begin{lem}
	Let $\normsized{\tilde{\Lambda}} + \normsized[\big]{\hat{\tilde{\cSetMatA}}} \le \log n$. Then uniformly in $U$ and $V$
	\begin{equation} \label{f expansion gen}
	\begin{split}
	f(U(\Lambda_0 &+ n^{-1/2}\tilde{\Lambda})V^*, n^{1/2}\hat{\tilde{\cSetMatA}}) \\
	={} &- m\lambda_0^2 + n^{-1/2} \tr (\bar{z}_0\cMatPos + z_0\cMatPos^*) - n^{-1} \tr(2\lambda_0\tilde{\Lambda} + \bar{z}_0\cMatPos_U + z_0\cMatPos_V^*)^2/2 \\
	&+ n^{-1} \tr\cMatPos_U\cMatPos_V^* + n^{-1}\lambda_0^2\sqrt{\cumul{2}{2}}\tr(\wedge^2VU^*)\tilde{Q}_2 + n^{-1}\lambda_0^2\sqrt{\cumul{2}{2}}\tr\tilde{Q}_2^*(\wedge^2UV^*) \\
	&- n^{-1}\sum\limits_{\substack{p + s \text{ is even} \\ 0 \le p,s \le m \\ (p,s) \ne (1,1)}} \tr \tilde{Q}_{p,s}^*\tilde{Q}_{p,s} + O\big(n^{-3/2}\log^3 n\big),
	\end{split}
	\end{equation}
	where $\cMatPos_B$ is defined in \eqref{zeta_B def} and $\wedge^2 B$ is the second exterior power of a linear operator $B$ (see \cite{Vi:03} for definition and properties of an exterior power of a linear operator).
\end{lem}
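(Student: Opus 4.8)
The plan is to follow the same route as in the proof of Lemma~\ref{lem:f(UL V^*) expansion}, isolating the genuinely new contribution produced by the extra variables $\hat{\cSetMatA}$ and by the correction terms $n^{-1/2}\tilde{h}(\cMatA_2) + n^{-1}\mathtt{p}_c(\hat{\cSetMatA})$ inside $h$. First I would substitute $\cMatA_1 = U(\Lambda_0 + n^{-1/2}\tilde{\Lambda})V^*$ and $\hat{\cSetMatA} = n^{-1/2}\hat{\tilde{\cSetMatA}}$ into the definition \eqref{f def} of $f$. The quadratic term then splits as
\[
-\sum_{p+s\text{ even}}\tr Q_{p,s}^*Q_{p,s} = -\tr Q_1^*Q_1 - n^{-1}\sum_{(p,s)\ne(1,1)}\tr \tilde{Q}_{p,s}^*\tilde{Q}_{p,s},
\]
the second sum already yielding the last line of \eqref{f expansion gen}. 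The combination $-\tr Q_1^*Q_1 + \log\det A$ is exactly the Gaussian $f$ of \eqref{f Gauss}, so Lemma~\ref{lem:f(UL V^*) expansion} supplies all of its terms verbatim, namely $-m\lambda_0^2$, the $n^{-1/2}\tr(\bar{z}_0\cMatPos + z_0\cMatPos^*)$ term, the $n^{-1}\tr\cMatPos_U\cMatPos_V^*$ term and the squared term, up to $O(n^{-3/2}\log^3 n)$.

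It remains to expand the extra part of $\log h$. Writing $h = \det A\,(1 + n^{-1/2}\tilde{h}(\cMatA_2)/\det A + n^{-1}\mathtt{p}_c(\hat{\cSetMatA})/\det A)$ and using $\log(1+x) = x + O(x^2)$, I would note that near the saddle $\det A = 1 + O(n^{-1/2}\log n)$ by \eqref{log det A} and $\det A_0 = 1$, hence $1/\det A = 1 + O(n^{-1/2}\log n)$. Since $\mathtt{p}_c(0) = 0$, the term $n^{-1}\mathtt{p}_c(n^{-1/2}\hat{\tilde{\cSetMatA}}) = O(n^{-3/2}\log n)$ is absorbed into the error. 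Because $\tilde{h}$ is linear in $\cMatA_2$ by \eqref{tilde h def} and $\cMatA_2 = n^{-1/2}\tilde{Q}_2$, the surviving correction reduces to $n^{-1}\tilde{h}(\tilde{Q}_2)$, the quadratic and cross terms being $O(n^{-3/2}\log^2 n)$; the same estimate lets me evaluate $\tilde{h}$ with $A$ replaced by its saddle value.

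The decisive and most delicate step is the explicit evaluation of the Berezin integral \eqref{tilde h def} defining $\tilde{h}(\tilde{Q}_2)$. Recall from \eqref{new form of b_2} that $b_{k,2} = -\aConjMat{\aVecD_k}A\aVecD_k$, so the Gaussian Grassmann weight $e^{b_{k,2}}$ has propagator $A^{-1}$, which to the required order equals $\diag(U,V)A_0^{-1}\diag(U^*,V^*)$ with $A_0$ from \eqref{A_0,A_1 def}. A direct computation gives $A_0^{-1} = \begin{pmatrix} -\bar{z}_0 I & -\lambda_0 I \\ \lambda_0 I & -z_0 I \end{pmatrix}$, so the blocks of $A^{-1}$ coupling the $\aSclB$- and $\aSclC$-sectors are $-\lambda_0 UV^*$ and $\lambda_0 VU^*$. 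Using the explicit forms \eqref{chi def} of $\tilde{\matA}_{k,2,2}$ and $\matA_{k,2,2}$ — each a product of two factors from one sector and two conjugate factors from the other, indexed by pairs $\alpha,\beta\in\indexset_{m,2}$ — I would apply Wick's theorem. Only the cross-sector contractions do not vanish, each carrying a factor $\pm\lambda_0$, so the two contractions needed for each quadruple produce $\lambda_0^2$. Summing over the index pairs, the antisymmetrized pair of contractions assembles into the $2\times 2$ minors, i.e.\ into $\lambda_0^2\wedge^2(VU^*)$ (resp.\ $\lambda_0^2\wedge^2(UV^*)$), and together with the prefactor $\sqrt{\cumul{2}{2}}$ carried by $\tilde{\matA}_{k,2,2}$ and $\matA_{k,2,2}$ this yields
\[
\tilde{h}(\tilde{Q}_2) = \lambda_0^2\sqrt{\cumul{2}{2}}\,\big(\tr(\wedge^2 VU^*)\tilde{Q}_2 + \tr \tilde{Q}_2^*(\wedge^2 UV^*)\big),
\]
which is precisely the new pair of terms in \eqref{f expansion gen}. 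The main obstacle is the sign bookkeeping and the $\wedge^2$-identification in this Grassmann computation; once it is carried out, collecting all contributions completes the proof.
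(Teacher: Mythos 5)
Your proof is correct and matches the paper's argument in all essentials: the same splitting of the quadratic terms $-\sum \tr Q_{p,s}^*Q_{p,s}$, the same absorption of $n^{-1}\mathtt{p}_c(n^{-1/2}\hat{\tilde{\cSetMatA}})$ into the error via $\mathtt{p}_c(0)=0$, the same reduction by linearity to $n^{-1}\tilde{h}(\tilde{Q}_2)$, and the same evaluation of the Berezin integral \eqref{tilde h def} at the saddle --- your Wick-contraction computation with propagator $\diag(U,V)A_0^{-1}\diag(U^*,V^*)$ (your $A_0^{-1}$ and the cross-sector blocks $-\lambda_0 UV^*$, $\lambda_0 VU^*$ are exactly right, and the antisymmetrized pair of contractions assembling into $2\times 2$ minors is precisely how $\wedge^2$ arises) is the paper's change of variables $\tilde{\aVecB} = U^*\aVecBt$, $\tilde{\aVecC} = V^*\aVecCt$ followed by factorized Grassmann integration, phrased in different but equivalent language. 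One cosmetic point: you silently read the argument $n^{1/2}\hat{\tilde{\cSetMatA}}$ in the statement as the scaling $\hat{\cSetMatA} = n^{-1/2}\hat{\tilde{\cSetMatA}}$, which is indeed what the paper's own proof uses (cf.\ $n^{-1}\mathtt{p}_c(n^{-1/2}\hat{\tilde{\cSetMatA}})$ and $n^{-1/2}\tilde{h}(n^{-1/2}\tilde{Q}_2) = n^{-1}\tilde{h}(\tilde{Q}_2)$), so the exponent in the lemma as printed is a typo, not a gap in your argument.
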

\begin{proof}
	Differently from the Gaussian case $f$ has additional terms of the form $\tr Q_{p,s}^*Q_{p,s}$ and additional term $n^{-1/2} \tilde{h}(Q_2) + n^{-1}\mathtt{p}_c(\hat{\cSetMatA})$ under the logarithm (cf.\ \eqref{f def} and \eqref{f Gauss}), where $\tilde{h}$ and $\mathtt{p}_c$ are defined in the assertion of Lemma \ref{lem:int over psi,tau}. The contribution of the terms $\tr Q_{p,s}^*Q_{p,s}$ to the expansion \eqref{f expansion gen} is evident. Furthermore, $n^{-1}\mathtt{p}_c(n^{-1/2}\hat{\tilde{\cSetMatA}}) = O\big(n^{-3/2}\log^3 n\big)$ because $\mathtt{p}_c$ is a polynomial with zero constant term. Hence, it remains to determine the contribution of the term $n^{-1/2} \tilde{h}(Q_2)$.
	
	In order to simplify notations, let us omit index $k$ in \eqref{tilde h def}. Thus, now $\aVecBt$ and $\aVecCt$ denote vectors 
	\begin{equation*}
		\begin{pmatrix}
			\aSclB_{k1} \\
			\vdots \\
			\aSclB_{km}
		\end{pmatrix}
		\quad \text{and} \quad
		\begin{pmatrix}
			\aSclC_{k1} \\
			\vdots \\
			\aSclC_{km}
		\end{pmatrix}
	\end{equation*}
	respectively. Then \eqref{tilde h def} is written as
	\begin{equation*}
		\tilde{h}(Q_2) = -\int \left(\tr \tilde{\matA}_{2,2} Q_{2} + \tr Q_{2}^* \matA_{2,2}\right) e^{b_{2}} d\aConjMat{\aVecBt} d\aVecB d\aConjMat{\aVecC} d\aVecC,
	\end{equation*}
	where $\tilde{\matA}_{2,2}$ and $\matA_{2,2}$ are defined in \eqref{chi def} and $b_2$ has the form \eqref{new form of b_2}. Therefore
	\begin{multline}\label{tilde h before cv}
		n^{-1/2} \tilde{h}(n^{-1/2}\tilde{Q}_2) = n^{-1} \tilde{h}(\tilde{Q}_2) = -\frac{\sqrt{\cumul{2}{2}}}{n}\int d\aConjMat{\aVecBt} d\aVecB d\aConjMat{\aVecC} d\aVecC\, e^{- \aConjMat{\aVecD}A\aVecD} \\
		\times\sum\limits_{\alpha, \beta \in \indexset_{m,2}} \left(\aSclC_{k\beta_{1}}^{\phantom{+}} \aSclC_{k\beta_{2}}^{\phantom{+}} \aConjScl{\aSclB_{k\alpha_{1}}}\aConjScl{\aSclB_{k\alpha_{2}}} \tilde{q}_{\alpha\beta}^{(2)} + \bar{\tilde{q}}_{\alpha\beta}^{(2)} \aSclB_{k\alpha_{1}} \aSclB_{k\alpha_{2}} \aConjScl{\aSclC_{k\beta_{1}}}\aConjScl{\aSclC_{k\beta_{2}}}\right),
	\end{multline}
	where $\aVecD$ is defined in \eqref{Psi def}, $\indexset_{m,2}$ is defined in \eqref{indexset def}.	Let us change the variables $\tilde{\aVecB} = U^*\aVecBt$, $\aConjMat{\tilde{\aVecB}} = \aConjMat{\aVecBt}U$, $\tilde{\aVecC} = V^*\aVecC$, $\aConjMat{\tilde{\aVecC}} = \aConjMat{\aVecC}V$. We have
	\begin{equation}\label{cv1}
	\begin{split}
	\aSclC_{k\beta_{1}}^{\phantom{+}} \aSclC_{k\beta_{2}}^{\phantom{+}} \aConjScl{\aSclB_{k\alpha_{1}}}\aConjScl{\aSclB_{k\alpha_{2}}} &= (V\tilde{\aVecC})_{\beta_{1}}(V\tilde{\aVecC})_{\beta_{2}} (\aConjMat{\tilde{\aVecB}}U^*)_{\alpha_{1}}(\aConjMat{\tilde{\aVecB}}U^*)_{\alpha_{2}} \\
	&= \sum\limits_{\gamma_1, \gamma_2 = 1}^m \sum\limits_{\delta_1, \delta_2 = 1}^m v_{\beta_1\gamma_1}\tilde{\aSclC}_{k\gamma_1}v_{\beta_{2}\gamma_2} \tilde{\aSclC}_{k\gamma_2}\aConjScl{\tilde{\aSclB}_{k\delta_1}} \conj{u}_{\alpha_{1}\delta_1}\aConjScl{\tilde{\aSclB}_{k\delta_2}} \conj{u}_{\alpha_{2}\delta_2} \\
	&= \sum\limits_{\gamma, \delta \in \indexset_{m,2}} (v_{\beta_1\gamma_1}v_{\beta_{2}\gamma_2} - v_{\beta_1\gamma_2}v_{\beta_{2}\gamma_1})\tilde{\aSclC}_{k\gamma_1} \tilde{\aSclC}_{k\gamma_2}\aConjScl{\tilde{\aSclB}_{k\delta_1}} \aConjScl{\tilde{\aSclB}_{k\delta_2}} \\
	&\phantom{= \sum\limits_{\gamma, \delta \in \indexset_{m,2}}}\times (\conj{u}_{\alpha_{1}\delta_1} \conj{u}_{\alpha_{2}\delta_2} - \conj{u}_{\alpha_{1}\delta_2}\conj{u}_{\alpha_{2}\delta_1}) \\
	&= \sum\limits_{\gamma, \delta \in \indexset_{m,2}} (\wedge^2V)_{\beta\gamma}\tilde{\aSclC}_{k\gamma_1} \tilde{\aSclC}_{k\gamma_2}\aConjScl{\tilde{\aSclB}_{k\delta_1}} \aConjScl{\tilde{\aSclB}_{k\delta_2}}(\wedge^2U^*)_{\delta\alpha},
	\end{split}
	\end{equation}
	where $u_{jk} = (U)_{jk}$, $v_{jk} = (V)_{jk}$. Similarly
	\begin{equation}\label{cv2}
	\aSclB_{k\alpha_{1}} \aSclB_{k\alpha_{2}} \aConjScl{\aSclC_{k\beta_{1}}}\aConjScl{\aSclC_{k\beta_{2}}} = \sum\limits_{\gamma, \delta \in \indexset_{m,2}} (\wedge^2U)_{\alpha\gamma} \tilde{\aSclB}_{k\gamma_1} \tilde{\aSclB}_{k\gamma_2} \aConjScl{\tilde{\aSclC}_{k\delta_1}} \aConjScl{\tilde{\aSclC}_{k\delta_2}} (\wedge^2V^*)_{\delta\beta}.
	\end{equation}
	Besides,
	\begin{equation}\label{cv3}
	\aConjMat{\aVecD}A\aVecD = \aConjMat{\tilde{\aVecD}}\tilde{A}\tilde{\aVecD} = \aConjMat{\tilde{\aVecD}}A_0\tilde{\aVecD} + O(n^{-1/2}\log n),
	\end{equation}
	where $A_0$ is defined in \eqref{A_0,A_1 def} and
	\begin{align*}
		\tilde{\aVecD} = 
		\begin{pmatrix}
			\tilde{\aVecB} \\
			\tilde{\aVecC}
		\end{pmatrix}, 
		\qquad \tilde{A} = 
		\begin{pmatrix}
			U^*	& 0 \\
			0	& V^*
		\end{pmatrix}
		A
		\begin{pmatrix}
			U	& 0 \\
			0	& V
		\end{pmatrix} 
		= 
		\begin{pmatrix}
			-U^*ZU	 & \Lambda \\
			-\Lambda & -V^*Z^*V
		\end{pmatrix}.
	\end{align*}
	The ``differentials'' change as follows 
	\begin{equation}\label{cv_diff}
	\begin{split}
	d\aVecB &= \det\nolimits^{-1} U d\tilde{\aVecB}, \\
	d\aConjMat{\aVecB} &= \det\nolimits^{-1} U^* d\aConjMat{\tilde{\aVecB}}
	\end{split}
	\end{equation}
	and for $d\aVecC$ likewise. Eventually, substitution of \eqref{cv1}--\eqref{cv_diff} into \eqref{tilde h before cv} yields
	\begin{equation} \label{tilde h after cv}
	\begin{split}
	n^{-1} \tilde{h}(\tilde{Q}_2) &= -\frac{\sqrt{\cumul{2}{2}}}{n} \sum\limits_{\gamma, \delta \in \indexset_{m,2}} \int \bigg(\tilde{\aSclC}_{k\gamma_1} \tilde{\aSclC}_{k\gamma_2}\aConjScl{\tilde{\aSclB}_{k\delta_1}} \aConjScl{\tilde{\aSclB}_{k\delta_2}}\left((\wedge^2U^*)\tilde{Q}_2 (\wedge^2V)\right)_{\delta\gamma} \\
	&\phantom{=-\frac{\sqrt{\cumul{2}{2}}}{n}\int \sum\limits_{\gamma, \delta \in \indexset_{m,2}} \bigg(}+ \left((\wedge^2V^*)\tilde{Q}_2^* (\wedge^2U)\right)_{\delta\gamma} \tilde{\aSclB}_{k\gamma_1}\tilde{\aSclB}_{k\gamma_2}\aConjScl{\tilde{\aSclC}_{k\delta_1}} \aConjScl{\tilde{\aSclC}_{k\delta_2}}\bigg) \\
	&\phantom{=-\frac{\sqrt{\cumul{2}{2}}}{n}\sum\limits_{\gamma, \delta \in \indexset_{m,2}}\int}\times e^{-\aConjMat{\tilde{\aVecD}}A_0\tilde{\aVecD}} d\aConjMat{\tilde{\aVecB}} d\tilde{\aVecB} d\aConjScl{\tilde{\aVecC}} d\tilde{\aVecC} + O\big(n^{-3/2}\log^3 n\big)
	\end{split}
	\end{equation}
	uniformly in $U$ and $V$. Due to the structure of $A_0$ the integration can be performed over $\tilde{\aSclB}_{kj}$, $\tilde{\aSclC}_{kj}$ separately for every~$j$.
	Notice that
	\begin{equation*}
		\int \aSclE \exp\left\{-\aConjMat{\aVecF} A'\aVecF\right\} d\aConjScl{\tilde{\aSclB}_{kj}} d\tilde{\aSclB}_{kj} d\aConjScl{\tilde{\aSclC}_{kj}} d\tilde{\aSclC}_{kj} = 0,
	\end{equation*}
	where $\aSclE$ is either $\aConjScl{\tilde{\aSclB}_{kj}}$, $\tilde{\aSclB}_{kj}$, $\aConjScl{\tilde{\aSclC}_{kj}}$ or $\tilde{\aSclC}_{kj}$ and
	\begin{equation*}
		\aVecF =
		\begin{pmatrix}
			\tilde{\aSclB}_{kj} \\
			\tilde{\aSclC}_{kj}
		\end{pmatrix},
		\quad A' =
		\begin{pmatrix}
			-z_0	 & \lambda_0 \\
			-\lambda_0 & -\bar{z}_0
		\end{pmatrix}.
	\end{equation*}
	Hence the terms with $\gamma \ne \delta$ in \eqref{tilde h after cv} are zeros. Furthermore, expanding the exponent into series, one can observe that
	\begin{equation*}
		\int \tilde{\aSclC}_{kj}\aConjScl{\tilde{\aSclB}_{kj}} e^{-\aConjMat{\aVecF} A'\aVecF} d\aConjScl{\tilde{\aSclB}_{kj}} d\tilde{\aSclB}_{kj} d\aConjScl{\tilde{\aSclC}_{kj}} d\tilde{\aSclC}_{kj} = -\int \tilde{\aSclB}_{kj}\aConjScl{\tilde{\aSclC}_{kj}} e^{-\aConjMat{\aVecF} A'\aVecF} d\aConjScl{\tilde{\aSclB}_{kj}} d\tilde{\aSclB}_{kj} d\aConjScl{\tilde{\aSclC}_{kj}} d\tilde{\aSclC}_{kj} = \lambda_0.
	\end{equation*}
	It implies
	\begin{equation*}
		\begin{split}
			n^{-1} \tilde{h}(\tilde{Q}_2) &= n^{-1}\lambda_0^2\sqrt{\cumul{2}{2}} (\tr(\wedge^2U^*)\tilde{Q}_2(\wedge^2V) + \tr(\wedge^2V^*)\tilde{Q}_2^*(\wedge^2U)) + o(n^{-1}) \\
			&= n^{-1}\lambda_0^2\sqrt{\cumul{2}{2}} (\tr(\wedge^2VU^*)\tilde{Q}_2 + \tr\tilde{Q}_2^*(\wedge^2UV^*)) + O\big(n^{-3/2}\log^3 n\big).
		\end{split}
	\end{equation*}
	The above relation completes the proof of \eqref{f expansion gen}.
\end{proof}
An analog of Lemma \ref{lem:est for Re f} is
\begin{lem} 
	Let $\tilde{f}(\cSetMatA) = f(\cSetMatA) - f(\Lambda_0, 0)$. Then for sufficiently large $n$
	\begin{equation*} 
		\max_{\frac{\log n}{\sqrt{n}} \le \norm{\Lambda - \Lambda_0} + \normsized{\hat{\cSetMatA}} \le r} \Re \tilde{f}(U\Lambda V^*, \hat{\cSetMatA}) \le -C\frac{\log^2 n}{n}
	\end{equation*}
	uniformly in $U$ and $V$.
\end{lem}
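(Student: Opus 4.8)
The plan is to repeat the scheme of the proof of Lemma~\ref{lem:est for Re f}, now moving radially out of the joint stationary point $(\Lambda_0, 0)$ in the whole space of the variables $(\Lambda, \hat{\cSetMatA})$ while keeping $U$ and $V$ as parameters, and to use the expansion \eqref{f expansion gen} in place of \eqref{f expansion}. Writing $f = f_0 + n^{-1/2}f_r$ as in \eqref{f_r def}, the first step is to verify, exactly as in the Gaussian case, that the first and second derivatives of $f_r$ with respect to $\Lambda$ and $\hat{\cSetMatA}$ are bounded uniformly in $n$, $U$, $V$ on a fixed $n$-independent neighbourhood of $(\Lambda_0, 0)$. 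This follows because $f - f_0 = \log h - \log h_0$ with $h$, $h_0$ polynomials in the entries (cf.\ \eqref{h def}, \eqref{h_0 def}), $h_0$ bounded away from zero near the stationary point (there $h_0 \approx 1$ since $\abs{z_0}^2 + \lambda_0^2 = 1$), and $h = h_0(1 + O(n^{-1/2}))$, whence $f_r = \sqrt n\,\log(h/h_0) = O(1)$ together with its derivatives.

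Next I would parametrize a point of the shell as $(\Lambda_0 + tE_\Lambda,\, t\hat E)$, where $t = \norm{\Lambda - \Lambda_0} + \normsized{\hat{\cSetMatA}}$ and the direction satisfies $\norm{E_\Lambda} + \normsized{\hat E} = 1$, and differentiate $\Re\tilde f$ along this ray. Since $f_0$ is invariant under $Q_1 \mapsto UQ_1V^*$ it depends only on $(\Lambda, \hat{\cSetMatA})$, so $\nabla f_0(\Lambda_0, 0) = 0$ by Lemma~\ref{lem:max of f_0}; Taylor expanding and inserting the boundedness of $\nabla f_r$ gives
\[
\der{}{t}\Re\tilde f = t\langle f_0''(\Lambda_0, 0)\,w, w\rangle + r_1 + O(n^{-1/2}), \qquad \abs{r_1} \le Ct^2,
\]
where $w$ is the real coordinate vector of $(E_\Lambda, \hat E)$. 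By Lemma~\ref{lem:max of f_0} the full Hessian $f_0''(\Lambda_0, 0)$ is negative definite, so $\langle f_0''w, w\rangle \le -c_0 < 0$ uniformly over unit directions and over $U$, $V$ (the latter because $f_0$ does not see $U$, $V$). Choosing the neighbourhood radius $\delta$ small enough to absorb $r_1$ and using $t \ge \frac{\log n}{\sqrt n} \gg n^{-1/2}$, the derivative is strictly negative on $\frac{\log n}{\sqrt n} \le t \le \delta$, so along every ray $\Re\tilde f$ is decreasing there and its maximum over the shell is attained on the inner boundary $t = \frac{\log n}{\sqrt n}$.

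On that inner boundary I would invoke \eqref{f expansion gen}. Subtracting its value at $\tilde\Lambda = 0$, $\hat{\tilde{\cSetMatA}} = 0$, the quadratic part in $(\tilde\Lambda, \hat{\tilde{\cSetMatA}})$ equals
\[
-n^{-1}\Big(2\lambda_0^2\,\tr\tilde\Lambda^2 + \sum_{(p,s) \ne (1,1)} \tr\tilde Q_{p,s}^*\tilde Q_{p,s}\Big),
\]
which is negative definite since $\lambda_0^2 = 1 - \abs{z_0}^2 > 0$, while the remaining contributions — the term linear in $\tilde\Lambda$, the $\sqrt{\cumul{2}{2}}$-weighted $\wedge^2$ terms linear in $\tilde Q_2$, and the $O(n^{-3/2}\log^3 n)$ error — are at most $O(n^{-1}\log n)$ on the surface $\norm{\tilde\Lambda} + \normsized[\big]{\hat{\tilde{\cSetMatA}}} = \log n$. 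There the quadratic part is of order $-n^{-1}\log^2 n$ and hence dominates, giving a bound $\le -Cn^{-1}\log^2 n$. Combined with $f(U\Lambda_0 V^*, 0) - f(\Lambda_0, 0) = O(n^{-1})$ (read off from \eqref{f expansion gen} at $\tilde\Lambda = 0$, $\hat{\tilde{\cSetMatA}} = 0$, the $n^{-1/2}$-order term being $U,V$-independent), this yields $\Re\tilde f \le -Cn^{-1}\log^2 n$ on $t = \frac{\log n}{\sqrt n}$, hence on the whole shell $\frac{\log n}{\sqrt n} \le t \le \delta$. To replace $\delta$ by $r$ I would use that, by Lemma~\ref{lem:max of f_0}, $f_0$ is separated from $f_0(\Lambda_0, 0)$ by a fixed gap on the compact annulus $\delta \le t \le r$, so that $f = f_0 + n^{-1/2}f_r$ with $f_r$ bounded above uniformly stays below $f(\Lambda_0, 0) - Cn^{-1}\log^2 n$ there as well. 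The main obstacle, beyond the bookkeeping, is the uniformity in $U$ and $V$: as in the Gaussian case it rests on the $n^{-1/2}$-order terms being $U,V$-independent, and on the new non-Gaussian $\wedge^2$ terms in \eqref{f expansion gen} being only linear in $\tilde Q_2$, hence subdominant, rather than spoiling the negative definiteness of the Hessian.
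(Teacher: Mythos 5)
Your proposal is correct and is essentially the paper's own argument: the paper disposes of this lemma with the single remark that the Gaussian-case proof of Lemma~\ref{lem:est for Re f} ``needs only cosmetic changes'' to accommodate the extra variables $\hat{\cSetMatA}$, and your write-up spells out exactly those changes (same decomposition $f = f_0 + n^{-1/2}f_r$ with bounded $f_r$-derivatives, the radial-derivative argument using the full negative-definite Hessian from Lemma~\ref{lem:max of f_0}, extension from $\delta$ to $r$ via the upper bound on $f_r$, and uniformity in $U,V$). Your additional observation that the new $\sqrt{\cumul{2}{2}}$-weighted $\wedge^2$ terms in \eqref{f expansion gen} are only linear in $\tilde{Q}_2$ and hence subdominant on the shell is precisely why the paper can treat the modification as cosmetic.
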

The proof needs only cosmetic changes because of additional variables $\hat{\cSetMatA}$. Following the proof in the Gaussian case one can see that \eqref{n-indep int} transforms into
\begin{equation*}
	\begin{split}
		\CF_m = C\mathsf{k}_n \int\limits_{\sqrt{n}\stpointsnbh} &\Vanddet^2(\tilde{\Lambda}) \exp\Big\{-\tr(2\lambda_0\tilde{\Lambda} + \bar{z}_0\cMatPos_U + z_0\cMatPos_V^*)^2/2 + \tr\cMatPos_U\cMatPos_V^* \\
		&+ \lambda_0^2\sqrt{\cumul{2}{2}}\tr(\wedge^2VU^*)\tilde{Q}_2 + \lambda_0^2\sqrt{\cumul{2}{2}}\tr\tilde{Q}_2^*(\wedge^2UV^*) \\
		&- \sum\limits_{\substack{p + s \text{ is even} \\ 0 \le p,s \le m \\ (p,s) \ne (1,1)}} \tr \tilde{Q}_{p,s}^*\tilde{Q}_{p,s}\Big\} d\mu(U) d\mu(V) d\tilde{\Lambda}d\hat{\cSetMatA}(1 + o(1)),
	\end{split}
\end{equation*}
where $\mathsf{k}_n$ is defined in \eqref{K_n def}. The Gaussian integration over $\hat{\cSetMatA}$ yields
\begin{equation*}
	\begin{split}
		\CF_m = C\mathsf{k}_n\exp\left\{\frac{m^2 - m}{2}\lambda_0^4\cumul{2}{2}\right\} \int &\Vanddet^2(\tilde{\Lambda}) \exp\Big\{-\tr(2\lambda_0\tilde{\Lambda} + \bar{z}_0\cMatPos_U + z_0\cMatPos_V^*)^2/2 + \tr\cMatPos_U\cMatPos_V^* \Big\} \\
		&\times d\mu(U) d\mu(V) d\tilde{\Lambda}(1 + o(1)).
	\end{split}
\end{equation*}
The last formula shows that there are no differences in further proof up to a high moments independent factor $\exp\left\{\frac{m^2 - m}{2}\lambda_0^4\cumul{2}{2}\right\}$.

\bibliography{GinDetCorr}


\end{document}